\newcommand{\declarecolor}[2]{\definecolor{#1}{RGB}{#2}\expandafter\newcommand\csname #1\endcsname[1]{\textcolor{#1}{##1}}}
\newtheorem{definition}{Definition}
\newtheorem{theorem}{Theorem}
\newtheorem*{theorem*}{Theorem}
\newtheorem{lemma}{Lemma}
\newtheorem{fact}{Fact}
\newtheorem{corollary}{Corollary}
\newtheorem{proposition}{Proposition}
\newtheorem{remark}{Remark}
\newcommand{\reg}{\mathrm{Reg}}
\DeclareMathOperator*{\argmin}{argmin}
\DeclareMathOperator{\poly}{poly}
\def\+#1{\mathcal{#1}}
\def\-#1{\mathbb{#1}}
\newcommand{\notshow}[1]{{}}
\newcommand{\AutoAdjust}[3]{{\mathchoice{ \left #1 #2  \right #3}{#1 #2 #3}{#1 #2 #3}{#1 #2 #3}}}
\newcommand{\Xcomment}[1]{{}}
\newcommand{\InParentheses}[1]{\AutoAdjust{(}{#1}{)}}
\newcommand{\InBrackets}[1]{\AutoAdjust{[}{#1}{]}}
\newcommand{\InAngles}[1]{\AutoAdjust{\langle}{#1}{\rangle}}
\newcommand{\InNorms}[1]{\AutoAdjust{\|}{#1}{\|}}
\renewcommand{\part}[2]{\frac{\partial #1}{\partial #2}}
\newcommand{\R}{\mathbbm{R}}
\newcommand{\X}{\mathcal{X}}
\newcommand{\prox}{\mathrm{prox}}
\newcommand{\wc}{\mathrm{wc}}
\newcommand{\bR}{\overline{\-R}}
\newcommand{\ext}{\mathrm{ext}}
\newcommand{\convex}{\mathrm{c}}
\newcommand{\strc}{\mathrm{sc}}
\newcommand{\Phiprox}{\Phi^{\X}_{\prox}}
\newcommand{\Phiproxeq}{\Phi_{\prox}}
\title{Proximal Regret and Proximal Correlated Equilibria: A New Tractable Solution Concept for Online Learning and Games\thanks{Authors are alphabetically ordered.}}
\author{
Yang Cai\thanks{Yale University. Email: \texttt{yang.cai@yale.edu}}\\
\and 
Constantinos Daskalakis\thanks{MIT CSAIL. Email:\texttt{costis@csail.mit.edu}}\\
\and 
Haipeng Luo\thanks{University of Southern California. Email: \texttt{haipengl@usc.edu}} \\
\and
Chen-Yu Wei\thanks{University of Virginia. Email: \texttt{chenyu.wei@virginia.edu}}\\
\and 
Weiqiang Zheng\thanks{Yale University. Email: \texttt{weiqiang.zheng@yale.edu}}\\
}
\date{}
\begin{document}
\maketitle
\begin{abstract}
Learning and computation of equilibria are central problems in game theory, theory of computation, and artificial intelligence.
In this work, we introduce \emph{proximal regret}, a new notion of regret based on proximal operators that lies strictly between external and swap regret. 
When every player employs a no-proximal-regret algorithm in a general convex game, the empirical distribution of play converges to \emph{proximal correlated equilibria} (PCE), a refinement of coarse correlated equilibria. 
Our framework unifies several emerging notions in online learning and game theory---such as gradient equilibrium and semicoarse correlated equilibrium---and introduces new ones. 
Our main result shows that the classic \emph{Online Gradient Descent (GD)} algorithm achieves an optimal $O(\sqrt{T})$ bound on proximal regret, revealing that GD, without modification, minimizes a stronger regret notion than external regret. 
This provides a new explanation for the empirically superior performance of gradient descent in online learning and games. 
We further extend our analysis to \emph{Mirror Descent} in the Bregman setting and to \emph{Optimistic Gradient Descent}, which yields faster convergence in smooth convex games.
\end{abstract}
\thispagestyle{empty}
\clearpage
\thispagestyle{empty}
\tableofcontents
\clearpage
\setcounter{page}{1}

\section{Introduction}
Learning and computation of equilibria have wide-ranging applications in computer science, game theory, economics, and optimization. Since von Neumann’s celebrated minimax theorem~\citep{v1928theorie}, establishing the existence of minimax equilibria in two-player zero-sum games, and Nash’s work~\citep{nash1950equilibrium}, showing the existence of Nash equilibria in multi-player and/or general-sum games, there has been a long line of research studying the statistical and computational complexity of equilibrium computation and learning. These efforts span normal-form games (see e.g.~\citep{von1944theory, lemke1964equilibrium, freund1997decision, fudenberg1998theory, daskalakis2009complexity, chen2009settling, anagnostides2025computational}), extensive-form games (see e.g.~\citep{kuhn1953contributions, zinkevich2007regret, von2008extensive, Farina22:Simple, farina2022kernelized}), Bayesian games (see e.g.~\citep{harsanyi1967games, hartline2015no, peng2024complexity}), the broad family of convex games (see e.g.~\citep{rosen1965existence, stoltz2007learning, gordon2008no, farina2022near, papadimitriou2023computational, daskalakis2025efficient}), as well as stochastic games (see e.g.~\citep{bai2020provable,jin2021v,daskalakis2020independent,daskalakis2023complexity}).

We focus on convex games in this paper, and our central question is when and what types of equilibria can be computed efficiently and/or learned efficiently by interacting agents. This question is fundamental to justify the use of a certain kind of equilibrium as a solution concept to predict the possible outcomes of a complex multi-agent interaction wherein self-interested agents adapt their strategies through learning.

Nash equilibrium is perhaps the most well-known solution concept in game theory and provides a strong form of strategic stability: no player can unilaterally deviate to improve their utility. Moreover, computing a Nash equilibrium is efficient in two-player zero-sum games via linear programming, and simple learning dynamics are known to converge to Nash equilibria even in the last-iterate sense; see e.g.~\citep{freund1997decision, daskalakis2019last, cai2022finite}. However, this strong tractability and learnability do not generalize. Computing a Nash equilibrium is intractable even for two-player bimatrix games~\citep{daskalakis2009complexity, chen2009settling, rubinstein2017settling}, and learning dynamics generally fail to converge to the set of Nash equilibria~\citep{milionis2023impossibility}.

Given the strong intractability results for Nash equilibrium, relaxed notions of equilibrium have been extensively studied. Among them, coarse correlated equilibrium (CCE) and correlated equilibrium (CE)~\citep{aumann1987correlated} emerge as natural solution concepts due to their intimate relationship to no-regret online learning methods~\citep{cesa2006prediction}. Indeed, when every player in some game employs a no-external-regret algorithm to iteratively update their strategy in response to the history of play of the other players, the empirical distribution of strategies converges to the set of CCE; when every player uses a no-swap-regret algorithm, the empirical distribution converges to the set of CE. 

CCE is tractable thanks to the existence of many simple and efficient no-external-regret learning algorithms, such as online gradient descent~\citep{zinkevich2003online}. However, it may be undesirable as a weak notion of equilibrium. For instance, a CCE can exhibit a high price of anarchy (PoA)~\citep{koutsoupias1999worst}, meaning that the worst CCE may yield outcomes substantially less desirable than Nash equilibrium \citep{syrgkanis2013composable, feldman2016correlated,jin2023first}. 

In contrast, CE provides stronger strategic stability and typically leads to more desriable outcomes~\citep{feldman2016correlated, hartline2024regulation, hartline2025regulation}. No-swap-regret learning algorithms~\citep{blum2007external,gordon2008no} further possess advantages absent in no-external-regret algorithms: they are non-exploitable~\citep{deng2019strategizing}, Pareto-optimal~\citep{arunachaleswaran2024pareto}, and imply sequential calibration~\citep{fishelson2025high}. Despite their appeal, efficient no-swap-regret learning algorithms---and consequently, efficient algorithms for computing CE---remain elusive. While simple learning dynamics can converge to CE in normal-form games~\citep{foster1997calibrated,hart2000simple} and efficient algorithms exist for succinct games~\citep{papadimitriou2008computing}, it is unknown how to compute CE efficiently in general convex games. 

Recent breakthroughs have produced general swap-regret minimization algorithms~\citep{dagan2024external,peng2024fast}, yet these exhibit exponential dependence on the admissible regret. Moreover, any swap-regret minimization algorithm must incur exponential dependence on either the problem dimension or the admissible regret, even in extensive-form games~\citep{daskalakis2024lower}.

Towards efficient learning and computation of equilibria in games, we focus on \emph{$\Phi$-regret}~\citep{greenwald2003general}, a unified and flexible notion encompassing both external and swap regret in online learning. 
Given a strategy set $\mathcal{X} \subseteq \mathbb{R}^d$ and a sequence of loss functions $\ell^1, \ldots, \ell^T$, the $\Phi$-regret of a sequence of actions $x^1, \ldots, x^T$ is defined as
\[
\sum_{t=1}^T \ell^t(x^t) \;-\; \min_{\phi \in \Phi} \sum_{t=1}^T \ell^t(\phi(x^t)).
\]
Here, $\Phi$ is a set of strategy modifications, and each modification $\phi: \mathcal{X} \to \mathcal{X}$ is an endomorphism. 
Both external and swap regret can be viewed as special cases of $\Phi$-regret: external regret corresponds to the case where $\Phi$ contains all constant functions, while swap regret corresponds to the case when $\Phi$ contains all functions from $\mathcal{X}$ to itself. 

Analogous to the relationship between CCE and CE, when every player $i$ employs a no-$\Phi$-regret learning algorithm, the empirical distribution of strategies played converges to the set of $\Phi$-equilibria. 
A \emph{$\Phi$-equilibrium} is a distribution over joint strategy profiles such that no player can improve their expected utility by applying any strategy modification $\phi \in \Phi$.

$\Phi$-regret provides a fine-grained framework that interpolates between external and swap regret. 
While minimizing swap regret efficiently is known to be hard in general, 
a central open question is to characterize which notions of $\Phi$-regret \emph{can} be minimized efficiently. 
In this paper, we call an algorithm \emph{efficient} if it achieves a time-average regret of at most $\varepsilon$ 
within $T = \poly(\nicefrac{1}{\varepsilon}, d)$ rounds, where recall that $d$ is the dimension.
This motivates the following questions, originally raised by~\citet{daskalakis2025efficient}:

\medskip\noindent
    \hspace{.1in}
    \begin{minipage}
    {0.95\textwidth}\textbf{Motivating Questions:} Are there notions of $\Phi$-regret that are stronger than external regret and can be minimized efficiently against an \emph{adversarial} sequence of convex losses? 
Relatedly, are there notions of $\Phi$-equilibrium stronger than CCE that can be computed efficiently in general convex games?
\end{minipage}
\medskip
\noindent 

\subsection{Our Results}
We provide a positive answer to this motivating question and introduce \emph{proximal regret}, a new tractable notion of $\Phi$-regret that is strictly stronger than external regret. The proximal regret builds on the \emph{proximal operator}, a fundamental notion in optimization theory~\citep{parikh2014proximal}. Given any $\rho$-weakly convex function $f: \+X \rightarrow (-\infty,+\infty]$, with $\rho<1$, the proximal operator $\prox_f: \+X \rightarrow \+X$ is a mapping defined as\footnote{The class of $\rho$-weakly convex functions includes all convex functions and all $\rho$-smooth functions. See \Cref{sec:pre} for a formal definition.}
\begin{align*}
    \prox_f(x) = \argmin_{x' \in \+X} \left\{ f(x') + \frac{1}{2}\InNorms{x' - x}^2\right\}.
\end{align*}
For a sequence of actions $x^1, \ldots, x^T$, the proximal regret with respect to $f$ against a sequence of convex loss functions $\ell^1, \ldots, \ell^T$ is defined as 
\begin{align*}
    \reg_f^T:= \sum_{t=1}^T \ell^t(x^t) - \sum_{t=1}^T \ell^t(\prox_f(x^t)).
\end{align*}
For a set of $\rho$-weakly convex functions $\+F$, the proximal regret with respect to $\+F$ is $\reg_{\+F}^T = \sup_{f \in \+F} \reg^T_f$. 

In what follows, we illustrate several special cases of proximal regret $\reg_{\+F}^T$ arising from different choices of the function class $\+F$. Although our framework allows any sufficiently weakly convex functions, proximal regret already exhibits a rich behavior for certain natural subclasses.
\begin{itemize}
    \item \textbf{External Regret and Beyond.} 
We first show that proximal regret captures external regret (see \Cref{sec:pre} for a formal definition). 
To recover external regret, we can choose the set of indicator functions $\+F_{\ext} = \{\-I_{\{x\}} : x \in \+X\}$, which are convex.\footnote{An indicator function $\mathbb{I}_{\{x\}}$ is such that $\mathbb{I}_{\{x\}}(x) = 0$ and $\mathbb{I}_{\{x\}}(x') = +\infty$ for $x' \ne x$.} 
Since $\prox_{\-I_{\{x\}}}(\cdot) = x$ is a constant function, the proximal regret $\reg_{\+F_{\ext}}$ is equivalent to the external regret. 
More generally, we can take $\+F = \{\-I_S : S \subseteq \+X \text{ is convex}\}$, in which case the deviation $\prox_{\-I_S}(\cdot)$ corresponds to the $\ell_2$ projection onto $S$. 
To the best of our knowledge, the corresponding notion of proximal regret is novel and has not been explored in the literature.
    \item \textbf{Non-negativity.} While external regret may be negative, the proximal regret is non-negative whenever we allow a constant function $f$. This is because $\prox_f$ is the identity mapping for constant $f$.
    \item \textbf{Gradient Equilibrium.} 
A recent work~\citep{angelopoulos2025gradient} proposes a property for online learning called \emph{gradient equilibrium}, which requires that the averaged gradient norm vanishes, i.e., 
\[
\InNorms{\tfrac{1}{T}\sum_{t=1}^T \nabla \ell^t(x^t)} = o(1).
\]
The gradient equilibrium property is incomparable with the no-external-regret property and has interesting implications in online conformal prediction~\citep{angelopoulos2025gradient,ramalingam2025relationship}. 
We note that gradient equilibrium is implied by proximal regret with respect to linear functions. 
Specifically, let $\+F = \{ f_v(x) = \InAngles{v, x} : \InNorms{v} = 1 \}$. 
Then $\prox_{f_v}(x) = \Pi_{\+X}[x - v]$ for all $x$. 
This special case of proximal regret, $\reg_{\+F}^T$, has been studied under the names of \emph{projection-based regret}~\citep{cai2024tractable} and \emph{no-move regret}~\citep{angelopoulos2025gradient}. 
In the setting of online conformal prediction, where $\+X = \-R^d$ is unconstrained, sublinear \emph{linearized} proximal regret, $\reg_{\+F}^T = o(T)$, implies the gradient equilibrium property.\footnote{The linearized regret is defined as $\sum_{t=1}^T \InAngles{\nabla \ell^t(x^t), x^t - \prox_f(x^t)}$, which is slightly stronger than the standard regret when the losses $\{\ell^t\}_{t \in [T]}$ are convex. All of our results hold for linearized regret.}
    \item \textbf{Symmetric Linear Swap Regret.} 
By choosing $\+F$ to be quadratic functions that are smooth but not necessarily convex, the strategy modifications induced by the proximal operator cover all \emph{symmetric} affine endomorphisms, i.e., mappings of the form $\phi(x) = A x + b$ where $A$ is symmetric (see \Cref{sec:symmetric linear swap regret} for details). 
We refer to the corresponding $\Phi$-regret as \emph{symmetric linear swap regret}, and the corresponding $\Phi$-equilibrium as \emph{symmetric linear correlated equilibrium}. 
In the special case of normal-form games where strategies lie in the simplex $\+X = \Delta^d$, symmetric linear correlated equilibrium coincides with a refined notion of CCE called \emph{semicoarse correlated equilibrium}~\citep{ahunbay2025semicoarse}. 
In certain first-price auctions, while CCE may yield low social welfare, every symmetric linear correlated equilibrium achieves optimal social welfare~\citep{ahunbay2025semicoarse}.
\end{itemize}

From the above examples, we see that proximal regret forms a general class of $\Phi$-regret that is stronger than external regret and admits non-trivial guarantees in online learning and games. 
Given the generality of proximal regret with respect to weakly convex functions, it is a priori unclear how to minimize it efficiently. 
Surprisingly, we show that the simple yet fundamental no-external-regret algorithm, \hyperref[GD]{Online Gradient Descent (GD)}, achieves sublinear proximal regret.

\medskip
\noindent
\hspace{.1in}
\begin{minipage}{0.95\textwidth}
\textbf{Main Result.} 
We show that \hyperref[GD]{Online Gradient Descent (GD)} minimizes proximal regret in online convex optimization. 
Specifically, \hyperref[GD]{GD} guarantees $\reg^T_f = O(\sqrt{T})$ proximal regret for all $\rho$-weakly convex functions $f$ with $\rho < 1$, simultaneously (see \Cref{theorem:GD proximal regret} for the formal statement).
\end{minipage}
\medskip
\noindent

Our result significantly extends the classic guarantee of \hyperref[GD]{Online Gradient Descent}~\citep{zinkevich2003online}, showing that \hyperref[GD]{GD} minimizes not only external regret but also a stronger notion of proximal regret. 
In particular, we demonstrate that the sequence produced by \hyperref[GD]{GD} is not merely no-external-regret, but also satisfies additional constraints induced by the proximal operators of all sufficiently weakly convex functions. 
Moreover, our result implies that \hyperref[GD]{GD} dynamics in convex games converge to a refined subset of coarse correlated equilibria, which we refer to as \emph{proximal correlated equilibria (PCE)}. 

\subsection{Extensions and Discussions} 

\paragraph{Extension to the Bregman Setup and Mirror Descent.}  
Our analysis in the Euclidean setup naturally generalizes to the Bregman setup, where the proximal operator is defined using a general Bregman divergence instead of the Euclidean distance. 
For different choices of Bregman divergence, the corresponding Bregman proximal operator induces a new tractable class of $\Phi$-regret, which we call \emph{Bregman proximal regret}. 
In \Cref{sec:bregman proximal}, we extend our analysis of \hyperref[GD]{GD} and show that the \hyperref[MD]{Mirror Descent} algorithm~\citep{nemirovskij1983problem} minimizes Bregman proximal regret.

\paragraph{Fast Convergence and Social Regret in Convex Games.} 
Recall that \Cref{theorem:GD proximal regret} establishes an optimal $O(\sqrt{T})$ proximal regret bound for \hyperref[GD]{GD} against an adversarial sequence of loss functions. 
When every player employs \hyperref[GD]{GD} in a convex game, the empirical distribution of strategies converges to PCE at a rate of $O(T^{-1/2})$. 
However, since each player does not face a fully adversarial environment, smaller individual regret---and consequently faster convergence to PCE---is possible in smooth convex games. 

Starting from the pioneering work of~\citet{daskalakis2011near}, a long line of research on learning in games~\citep{daskalakis2011near,rakhlin2013optimization,syrgkanis2015fast,chen2020hedging,anagnostides2022near-optimal,anagnostides2022uncoupled,piliouras2022beyond,farina2022near,cai2024near,soleymani2025faster} has established accelerated convergence results in normal-form and Markov games. 
Many of these results employ the optimism technique, which stabilizes the learning process. 
In the context of general convex games, the current state of the art is $O(\log T)$ individual external regret and $O(1)$ social external regret. 
For the stronger notion of linear swap regret, no bound better than $O(\sqrt{T})$ is known.

In \Cref{sec:OG game setting}, we extend our analysis of \hyperref[GD]{GD} to Optimistic Gradient Descent (\ref{OG}), yielding improved instance-dependent regret bounds (\Cref{thm:adversaril regret of OG}) and faster convergence in the game setting (\Cref{thm:game regret of OG}). Specifically, for general convex games, we obtain $\reg_{\+F_{\convex}}^T=O(T^{-1/4})$ individual proximal regret bound for convex functions $\+F_{\convex}$, and $O(1)$ social proximal regret $\reg_{\+F_{\strc}}^T$ for strongly convex functions $\+F_{\strc}$. Both results are new and complement existing results. We also discuss a possible path to obtain $O(1)$ individual external regret in convex games using proximal regret. 

\paragraph{Proximal Regret Demystifies Some Successes of \hyperref[GD]{GD}.} 
As a fundamental algorithm in online learning and games, \hyperref[GD]{GD} and its variants find applications in a wide range of settings. 
Despite its simplicity, \hyperref[GD]{GD} often exhibits superior performance compared to generic no-external-regret algorithms. 
Our results offer a new perspective on why \hyperref[GD]{GD} is special: it minimizes the stronger notion of proximal regret, and its dynamics in convex games converge to proximal correlated equilibria, a strict subset of CCE.

In online conformal prediction, the goal is to guarantee marginal coverage in adversarial environments. 
While running \hyperref[GD]{GD} against the pinball loss ensures marginal coverage~\citep{gibbs2021adaptive}, recent concurrent works~\citep{angelopoulos2025gradient,ramalingam2025relationship} show that achieving no-external-regret against the pinball loss alone is insufficient for this guarantee. 
By adapting the analyses in~\citep{angelopoulos2025gradient,ramalingam2025relationship}, we can show that \hyperref[GD]{GD} achieves sublinear proximal regret $\reg_{\+F}^T$ (where $\+F$ contains bounded linear functions) in the online conformal prediction setting. 
As discussed above, sublinear proximal regret implies the gradient equilibrium property, which in turn guarantees marginal coverage~\citep{angelopoulos2025gradient}.

In fixed-value first-price auctions with at least three highest-value bidders, it is known that the price of anarchy (PoA) of CCE is $> 1$ while the PoA of CE is $1$~\citep{feldman2016correlated}. Recent work~\citep{ahunbay2025semicoarse} shows that the PoA of semicoarse correlated equilibria is also $1$ and when every player runs \hyperref[GD]{GD} with the same step size, the empirical distribution converges to semicoarse correlated equilibria. As discussed above, proximal correlated equilibria $\subseteq$ symmetric linear correlated equilibria $=$ semicoarse correlated equilibria in normal-form games. Thus our results on the adversarial proximal regret of \hyperref[GD]{GD} directly imply the social welfare guarantee of \hyperref[GD]{GD} dynamics and do not require each player to use the same step size. 

\subsection{Additional Related Works}
We note that the work of~\citet{daskalakis2025efficient} also provides one positive answer to the motivating questions. They consider \emph{linear} swap regret, a special case of $\Phi$-regret where $\Phi$ contains all (affine) linear endomorphisms over $\mathcal{X}$, that is, mappings of the form $\phi(x) = A x + b$. They present an efficient algorithm for minimizing linear swap regret and a polynomial-time algorithm for computing the corresponding notion of linear CE. Subsequently,~\citet{zhang2025learning} extend these results to settings where $\Phi$ contains low-dimensional functions, including low-degree polynomials. However, the tractability of $\Phi$-regret minimization beyond these settings remains unknown. Moreover, the no-regret algorithms proposed in~\citep{daskalakis2025efficient,zhang2025learning} are inspired by the \emph{Ellipsoid Against Hope} (EAH) framework~\citep{papadimitriou2008computing}, relying on sophisticated  subroutines that may not be practical for large-scale settings. 

\citet{gordon2008no} proposes a general framework for no-$\Phi$-regret algorithms, which requires (i) a no-external-regret algorithm over the set of strategy modifications $\Phi$, which is in general challenging to obtain, and (ii) a fixed-point oracle for mappings in $\Phi$. 
The work of~\citep{zhang2024efficient} generalizes this framework and shows that an ``expected fixed-point'' oracle suffices, which can be implemented efficiently. 
The GGM framework is powerful and underlies most known constructions of no-$\Phi$-regret algorithms in the literature. 
In contrast, our results on proximal regret and \hyperref[GD]{GD} show that a simple algorithm can minimize $\Phi$-regret---stronger than external regret---without requiring the strong primitives used in GGM.

The work of~\citet{cai2024tractable} studied the guarantee of \hyperref[GD]{GD} for $\Phi$-regret minimization and show that \hyperref[GD]{GD} minimizes a special class of $\Phi$-regret called projection-based regret (as we discussed above) and interpolation-based regret in online convex optimization.
Building on~\citep{cai2024tractable}, the early version of \citet{ahunbay2024first} proposed a generalized set of strategy modifications based on gradient fields, which include both projection-based and interpolation-based regret as special cases. 
His work encouraged us to consider a broad family of strategy modifications based on proximal operators—a family much more general than those in~\citep{cai2024tractable}—which we present here. 
While the two approaches are different and largely incomparable, the key distinction is that the early version of \citet{ahunbay2024first} studies different notions of equilibrium and focuses on providing inequality constraints for gradient flow in games, and does not address $\Phi$-regret in the adversarial online learning setting or the approximation of standard $\Phi$-equilibria, which are the main focus of our work. 

Concurrent to our results, a later version of \citet{ahunbay2024first} presents an adversarial proximal regret guarantee for \hyperref[GD]{GD} but requires additional regularity assumptions on the action sets. Building on our results, Ahunbay extended his original analysis and obtained an adversarial proximal regret guarantee for \hyperref[GD]{GD} for general convex sets in the latest version of~\citep{ahunbay2024first}. Moreover, while our analysis naturally generalizes to \hyperref[MD]{Mirror Descent (MD)} and the Bregman setting, his analysis does not fully extend to the Bregman case but requires the distance-generating function $\phi$ to be ``steep''. See \Cref{app:comparison} for a more detailed discussion on \citep{cai2024tractable,ahunbay2024first}.

\section{Preliminary}\label{sec:pre}
We denote $\-R$ the set of real numbers and $\bR = \-R \cup \{+\infty\}$. A (extended) real-valued function $f: \+X \rightarrow \bR$ is proper if it is not identically equal to $+\infty$; $f$ is $L$-smooth if its gradient $\nabla f$ is $L$-Lipschitz continuous; $f$ is $\alpha$-strongly convex if $f(x) \ge f(y)+ \InAngles{\partial f(y), x- y} + \frac{\alpha}{2}\InNorms{x - y}^2$ for all $x, y \in \+X$. A function is \emph{convex} if it is $0$-strongly convex. A notion related to convexity and smoothness is weak convexity.
\begin{definition}[Weakly Convex Functions]
    A function $f: \+X \rightarrow \bR$ is $\rho$-weakly convex for $\rho \ge 0$ if $f + \frac{\rho}{2}\InNorms{\cdot}^2$ is convex. 
\end{definition}
The subgradients of a weakly convex function $f$ at $x \in \+X$, denoted as $\partial f(x)$, are all the vectors $v$ such that $f(y) \ge f(x) + \InAngles{v, y -x} + o(\InNorms{x-y})$ as $y \rightarrow x$. For a $\rho$-weakly convex function $f$, the subgradients satisfy the following inequality: $\forall g \in \partial f(x)$, $f(y) \ge f(x) + \InAngles{g, y- x} - \frac{\rho}{2}\InNorms{x-y}^2$ for all $y \in \+X$. By definition, if $f$ is convex, then $f$ is $\rho$-weakly convex for $\rho \ge 0$; if $f$ is $L$-smooth, then $f$ is $\rho$-weakly convex for any $\rho \ge L$. 
We denote the class of all proper, lower semi-continuous, and $\rho$-weakly convex functions with $\rho < 1$ as $\+F_{\wc}(\+X)$. Throughout the paper, we may omit $\+X$ and write $\+F_{\wc}$ when the context is clear; we may omit the value $\rho$ when referring to a $\rho$-weakly convex function with $\rho <1$. For this class of weakly convex functions, we can define the proximal operator, which has been extensively studied in the optimization literature. We refer the readers to the book \citep{parikh2014proximal} for a comprehensive review. 

\begin{definition}[Proximal Operator]\label{dfn:proximal operator}
    The \emph{proximal operator} associated with a proper, lower semi-continuous, $\rho$-weakly convex function $f : \+X \rightarrow \bR$ with $\rho < 1$ is defined by
    \begin{align*}
    \prox_{f}(x) = \argmin_{x' \in \+X} \left\{ f(x') + \frac{1}{2}\InNorms{x'-x}^2 \right\}.
    \end{align*}
\end{definition}
Since $f$ is $\rho$-weakly convex with $\rho < 1$, the proximal operator $\prox_f$ is well-defined as the optimization problem is strongly convex and admits a unique solution. By the first-order optimality condition of the optimization problem in the definition of the proximal operator, we have the following fact.
\begin{fact}\label{fact:proximal operator}
    If $p = \prox_f(x)$, then there exists a subgradient $v \in \partial f(p)$ such that $\InAngles{x- v - p, x'-p} \le 0$ for all $x' \in \+X$.
\end{fact}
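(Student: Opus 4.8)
The plan is to recognize the claimed inequality as the first-order optimality (variational inequality) condition for the minimization problem defining $\prox_f$. Set $g(x') := f(x') + \frac{1}{2}\InNorms{x'-x}^2$, so that $p = \prox_f(x)$ is by definition the minimizer of $g$ over $\+X$. Because $f$ is $\rho$-weakly convex with $\rho<1$, the function $f + \frac{\rho}{2}\InNorms{\cdot}^2$ is convex, and writing $g(x') = \bigl[f(x')+\frac{\rho}{2}\InNorms{x'}^2\bigr] + \frac{1-\rho}{2}\InNorms{x'}^2 - \InAngles{x,x'} + \frac{1}{2}\InNorms{x}^2$ exhibits $g$ as the sum of a convex function, a $(1-\rho)$-strongly convex quadratic, and an affine term; hence $g$ is $(1-\rho)$-strongly convex, in particular convex, and $p$ is its unique minimizer over the convex set $\+X$.

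First I would invoke the standard first-order optimality condition for minimizing a convex function over a convex set: $p$ minimizes $g$ over $\+X$ if and only if there exists a subgradient $s \in \partial g(p)$ with $\InAngles{s, x'-p} \ge 0$ for all $x' \in \+X$ (equivalently, $-s$ lies in the normal cone of $\+X$ at $p$). This is precisely where convexity of $g$ is used: for a convex objective, the ``no feasible descent direction'' condition is both necessary and sufficient and admits this subgradient form.

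Next I would apply the subgradient sum rule. Since the quadratic $h(x') := \frac{1}{2}\InNorms{x'-x}^2$ is continuously differentiable with $\nabla h(p) = p - x$, we have $\partial g(p) = \partial f(p) + (p-x)$. Thus the optimality subgradient decomposes as $s = v + (p-x)$ for some $v \in \partial f(p)$, and $\InAngles{s, x'-p}\ge 0$ becomes $\InAngles{v + p - x,\, x'-p} \ge 0$ for all $x'\in\+X$, i.e. $\InAngles{x - v - p,\, x'-p} \le 0$, which is exactly the claim.

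The main point to handle with care is that $f$ itself is only weakly convex, not convex, so one must apply the convex-analytic tools to the right object. The clean resolution is that the optimality condition is invoked for $g$, which is genuinely (strongly) convex, while $f$ enters only through $\partial f(p)$ via the sum rule with the smooth quadratic $h$; for the regular (Fréchet) subdifferential used in the definition, the sum rule with a $C^1$ function is exact, so no delicate nonsmooth calculus is required. A fully first-principles alternative avoiding normal-cone language is to use that $p+t(x'-p)\in\+X$ for $t\in(0,1]$ with $g(p+t(x'-p)) \ge g(p)$; dividing by $t$ and letting $t \downarrow 0$ gives $g'(p;x'-p) \ge 0$, and combining $g'(p;d) = f'(p;d) + \InAngles{p-x,d}$ with the support-function identity $f'(p;d) = \max_{v\in\partial f(p)}\InAngles{v,d}$ recovers the same variational inequality, the one subtlety being a standard minimax/selection argument to extract a single $v$ valid simultaneously for all directions $d = x'-p$.
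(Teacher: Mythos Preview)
Your proposal is correct and takes essentially the same approach as the paper: both recognize the inequality as the first-order optimality (variational inequality) condition for the strongly convex minimization problem defining $\prox_f$. The paper simply invokes this condition without proof, while you spell out the details carefully, in particular noting that even though $f$ is only weakly convex, the objective $g(x')=f(x')+\frac12\|x'-x\|^2$ is genuinely convex so the convex optimality condition applies, and then recovering $v\in\partial f(p)$ via the exact sum rule $\partial g(p)=\partial f(p)+(p-x)$ with the $C^1$ quadratic.
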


\paragraph{Online Learning and Regret} In the online learning setup, every day $t \ge 1$, the learner chooses an action/strategy $x^t \in \+X$ while the adversary picks a loss function $\ell^t : \+X \rightarrow \-R$. The learner then incurs loss $\ell^t(x^t)$ and receives gradient feedback $\nabla \ell^t(x^t)$. An online learning algorithm decides the actions $x^t$ based on information received in the previous $t-1$ days. The goal of an online learning algorithm is to minimize the $\Phi$-regret, where $\Phi = \Phi(\+X)$ is a set of strategy modifications $\phi: \+X \rightarrow \+X$:
\begin{align*}
    \reg_\Phi^T:= \max_{\phi \in \Phi} \InParentheses{ \sum_{t=1}^T \ell^t(x^t) - \ell^t(\phi(x^t))}.
\end{align*}
The $\Phi$-regret quantifies the difference between the cumulative loss of the algorithm and that obtained by applying the best strategy modification. The $\Phi$-regret framework is general and covers several interesting notions. In particular, when we let $\Phi_{\mathrm{ext}}$ be all the constant functions over $\+X$, the resulting $\Phi_{\mathrm{ext}}$-regret is the \emph{external regret}:
\begin{align*}
    \reg_{\mathrm{ext}}^T := \max_{x \in \+X}\InParentheses{ \sum_{t=1}^T \ell^t(x^t) - \ell^t(x)}.
\end{align*}
If we let $\Phi_{\mathrm{swap}}$ be the set of all possible strategy modifications over $\+X$, the resulting $\Phi_{\mathrm{swap}}$-regret is called the swap regret.\footnote{The usual notion of swap regret is for the simplex $\+X = \Delta^d$ and allows all \emph{linear} strategy modifications. Here, we consider the stronger notion that works for general convex sets and all possible strategy modifications, which is also called full swap regret~\citep{fishelson2025full}.} 

\paragraph{Smooth Convex Games and Equilibria} A smooth convex game $\+G = \{[n], \{\+X_i\}_{i \in [n]}, \{u_i\}_{i \in [n]}\}$ has $n$ players $[n] = \{1, 2,\ldots, n\}$. Each player $i \in [n]$ has a compact convex strategy set $\+X_i \subseteq \-R^{d_i}$ and a continuously differentiable and concave utility function $u_i: \times_{i \in [n]} \+X_i \rightarrow [0,1]$.\footnote{Note that the utility is concave. We call the game a convex game just for notation consistency} We assume each $u_i$ is $G$-Lipschitz, i.e, $\InNorms{\nabla u_i(\cdot)}\le G$,  and $L$-smooth, i.e, $\InNorms{\nabla_{x_i} u_i(x) - \nabla_{x_i} u_i(x')} \le L\InNorms{x - x'}$ for all $x, x' \in \times_{i \in [n]} \+X_i$. Smooth convex games cover finite normal-form games, Bayesian games, and extensive-form games. 
\begin{definition}[$\Phi$-equilibrium]
    A $\varepsilon$-approximate $\Phi$-equilibrium in a smooth convex game with strategy sets $\+X_1, \ldots, \+X_n$ is a joint distribution $\sigma \in \Delta(\times_{i \in [n]} \+X_i)$ such that every player $i \in [n]$ and any strategy modifications $\phi \in \Phi(\+X_i)$, 
    \begin{align*}
        \-E_{x \sim \sigma}\InBrackets{u_i(x)} \ge \-E_{x \sim \sigma}\InBrackets{u_i(\phi(x_i), x_{-i})} - \varepsilon.
    \end{align*}
    We say $\sigma$ is a $\Phi$-equilibrium if the above inequalities holds with $\varepsilon=0$.
\end{definition}
It is well known that when every player employs an online learning algorithm in a smooth convex game, the empirical distribution of strategies played converges to the set of $(\max_i\{\reg^T_{\Phi(\+X_i)}/T\})$-approximate $\Phi$-equilibria~\citep{greenwald2003general}.

\section{Proximal Regret: A Tractable Notion Between External and Swap Regret}
\label{sec:proximal regret}
In this section, we introduce a class of $\Phi$-regret called \emph{proximal regret} parameterized by weakly convex functions. The main result of this section shows that the classic \hyperref[GD]{Online Gradient Descent} algorithm~\citep{zinkevich2003online} minimizes proximal regret. Since external regret is an instantiation of proximal regret, our result shows that the proximal regret is an efficient regret notion that lies between external regret and swap regret.

\paragraph{Proximal Regret} The strategy modification in proximal regret is defined using the proximal operator.
\begin{definition}[Proximal Regret]\label{dfn:proximal regret}
    For a sequence of losses $\ell^1, \ldots, \ell^T$, the \emph{proximal regret} associated with $f\in \+F_{\wc}(\+X)$ is 
    \begin{align*}
    \reg^T_f:= \sum_{t=1}^T \ell^t(x^t) - \ell^t(\prox_f(x^t)).
\end{align*}
The proximal regret for a family of functions $\+F \subseteq \+F_{\wc}(\+X)$ is defined as $\max_{f \in \+F} \reg^T_f$. We say an algorithm is \emph{no-proximal regret} w.r.t. $\+F$ if $\reg^T_f = o(T)$ for all $f \in \+F$.
\end{definition}
As we have discussed in the introduction, if we choose $\+F_{\mathrm{ext}}(\+X) = \{ \-I _{\{x\}}: x\in \+X\}$ to be all the indicator functions for one point in $\+X$, then the corresponding proximal regret $\max_{f \in \+F_{\mathrm{ext}}(\+X)}\reg_f^T = \reg^T_{\mathrm{ext}}$ is exactly the external regret. Thus, external regret is a special case of proximal regret. 

\subsection{Online Gradient Descent Minimizes Proximal Regret}

\begin{algorithm}[!ht]\label{GD}
    \KwIn{strategy space $\+X$, non-increasing step sizes $\{\eta_t\}$} 
    \KwOut{strategies $\{x^t\}$}
    \caption{Online Gradient Descent (GD)}
    Initialize $x^1 \in \+X$ arbitrarily\\
    \For{$t = 1,2, \ldots,$}{
    play $x^t$ and receive feedback $\nabla \ell^t(x^t)$.\footnotemark\\
    update $x^{t+1} = \Pi_\+X\InBrackets{x^t - \eta_t \ell^t}$.
    }
\end{algorithm}\footnotetext{When the function is non-differentiable, the algorithm receives a subgradient in $\partial \ell^t(x^t)$. In fact, our results work for any vector feedback $g^t \in \mathbb{R}^d$ and the linearized regret $\sum_{t=1}^T \InAngles{g^t, x^t - \prox_f(x^t)}$.}

We show a surprising result that \hyperref[GD]{Online Gradient Descent (GD)} achieves $\reg_f^T = O(\sqrt{T})$ \emph{simultaneously} for all $f \in \+F_{\wc}$, without any modification. Our result significantly extends the classical result of \hyperref[GD]{GD}~\citep{zinkevich2003online} by showing that \hyperref[GD]{GD} not only minimizes external regret but simultaneously the more general proximal regret. It shows that the sequence produced by \hyperref[GD]{GD} is not only a no-external-regret sequence but has additional features. It also implies that the \hyperref[GD]{GD} dynamics in convex games converge to a refined subset of coarse correlated equilibria, which we refer to as \emph{proximal correlated equilibria (PCE)}. We show more applications of proximal regret and proximal correlated equilibrium in later sections.

\begin{theorem}[Online Gradient Descent Minimizes Proximal Regret]\label{theorem:GD proximal regret}
    Let $\+X \subseteq \-R^d$ be a closed convex set and $\{\ell^t: \+X \rightarrow \-R\}_{t \in [T]}$ be a sequence of convex loss functions. Let $\{x^t\}_{t \in [T+1]}$ be the iterates generated by \hyperref[GD]{Online Gradient Descent (GD)}. Then for any $f \in \+F_{\wc}(\+X)$ that is $\rho$-weakly convex, 
    \begin{align*}
        &\sum_{t=1}^T \ell^t(x^t) -\sum_{t=1}^T \ell^t(\prox_f(x^t)) \le \sum_{t=1}^T \InAngles{\nabla \ell^t(x^t), x^t - \prox_f(x^t)} \\
        &\le \frac{D^2 + 2B_f - \InNorms{x^{T+1}-p^T}^2}{2\eta_T} + \sum_{t=1}^T\frac{\eta_t}{2}\InNorms{\nabla \ell^t(x^t)}^2 - \sum_{t=1}^{T-1} \frac{(1-\rho)}{2\eta_t} \InNorms{p^t - p^{t+1}}^2,
    \end{align*}
    where we denote $p^t = \prox_f(x^t)$, $D = \max_{t \in [T]} \InNorms{x^t-p^t}$, and $B_f = \max_{t \in [T]}f(p^t) - \min_{t \in [T]} f(p^t)$. 
    Moreover, if the step size is fixed $\eta_t = \eta$, then the above bound hold with $D = \InNorms{x^1 - p^1}$ and $B_f = f(p^1) - f(p^T)$.
\end{theorem}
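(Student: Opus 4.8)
The first inequality is immediate from convexity of each $\ell^t$: since $\ell^t(x^t) - \ell^t(\prox_f(x^t)) \le \InAngles{\nabla \ell^t(x^t),\, x^t - \prox_f(x^t)}$, summing over $t$ reduces the statement to bounding the \emph{linearized} proximal regret $S := \sum_{t=1}^T \InAngles{g^t,\, x^t - p^t}$, where $g^t = \nabla \ell^t(x^t)$ and $p^t = \prox_f(x^t)$. My starting point is the textbook one-step inequality for \hyperref[GD]{GD}: because $\Pi_\+X$ is nonexpansive and $x^{t+1} = \Pi_\+X[x^t - \eta_t g^t]$, for any comparator $u \in \+X$ we have $\InNorms{x^{t+1}-u}^2 \le \InNorms{x^t - u}^2 - 2\eta_t\InAngles{g^t, x^t - u} + \eta_t^2 \InNorms{g^t}^2$. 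The crucial twist is to use the \emph{time-varying} comparator $u = p^t \in \+X$, which after rearranging gives $\InAngles{g^t, x^t - p^t} \le \frac{1}{2\eta_t}\InParentheses{a_t - b_t} + \frac{\eta_t}{2}\InNorms{g^t}^2$ with $a_t := \InNorms{x^t - p^t}^2$ and $b_t := \InNorms{x^{t+1}-p^t}^2$.

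Summing yields $\sum_t \frac{\eta_t}{2}\InNorms{g^t}^2$ plus a telescoping-like term $\sum_t \frac{1}{2\eta_t}(a_t - b_t)$. With a fixed comparator this would telescope cleanly; here $p^t$ drifts, so the core of the proof is to control the mismatch between $b_t = \InNorms{x^{t+1}-p^t}^2$ and $a_{t+1} = \InNorms{x^{t+1}-p^{t+1}}^2$. Expanding the square gives the exact identity $b_t - a_{t+1} = 2\InAngles{x^{t+1}-p^{t+1},\, p^{t+1}-p^t} + \InNorms{p^{t+1}-p^t}^2$. I then invoke the variational characterization of the proximal operator (\Cref{fact:proximal operator}) at $x^{t+1}$: there is $v^{t+1}\in\partial f(p^{t+1})$ with $\InAngles{x^{t+1}-v^{t+1}-p^{t+1},\, p^t - p^{t+1}}\le 0$, i.e. $\InAngles{x^{t+1}-p^{t+1},\, p^{t+1}-p^t}\ge \InAngles{v^{t+1},\, p^{t+1}-p^t}$. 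Substituting this lower bound and then applying the $\rho$-weak-convexity subgradient inequality $f(p^t)\ge f(p^{t+1}) + \InAngles{v^{t+1},\, p^t - p^{t+1}} - \frac{\rho}{2}\InNorms{p^{t+1}-p^t}^2$ converts the inner products into function-value differences, giving $-(b_t - a_{t+1}) \le 2(f(p^t)-f(p^{t+1})) - (1-\rho)\InNorms{p^{t+1}-p^t}^2$. This is exactly where $\rho<1$ is essential: it turns the drift contribution into a genuine negative surplus term rather than a liability.

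For the fixed step size $\eta_t=\eta$, re-indexing gives $S \le \frac{1}{2\eta}\InParentheses{a_1 - b_T - \sum_{t=1}^{T-1}(b_t - a_{t+1})} + \sum_t\frac{\eta}{2}\InNorms{g^t}^2$, and plugging in the previous display makes the $f$-terms telescope to $f(p^1)-f(p^T)=B_f$, with $a_1 = \InNorms{x^1-p^1}^2 = D^2$ and $b_T = \InNorms{x^{T+1}-p^T}^2$; this reproduces the claimed bound verbatim. For general non-increasing $\eta_t$, I keep the weights attached, writing $S = \frac{a_1}{2\eta_1} - \frac{b_T}{2\eta_T} + \sum_{t=1}^{T-1}\InParentheses{\frac{a_{t+1}}{2\eta_{t+1}} - \frac{b_t}{2\eta_t}}$ and splitting each summand as $\frac{-(b_t-a_{t+1})}{2\eta_t} + a_{t+1}\InParentheses{\frac{1}{2\eta_{t+1}}-\frac{1}{2\eta_t}}$. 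The drift terms are bounded as above; using $a_{t+1}\le D^2$ together with $\frac{a_1}{2\eta_1}\le\frac{D^2}{2\eta_1}$, the monotone $1/\eta_t$-increments telescope so that all the $D^2$ pieces collapse to $\frac{D^2}{2\eta_T}$; and the weighted telescope $\sum_t \frac{1}{\eta_t}(f(p^t)-f(p^{t+1}))$ is handled by summation by parts, bounded by $\frac{1}{\eta_{T-1}}\InParentheses{\max_t f(p^t)-\min_t f(p^t)} \le \frac{B_f}{\eta_T}$. Collecting everything yields $\frac{D^2+2B_f-\InNorms{x^{T+1}-p^T}^2}{2\eta_T}$ plus the two sums, as stated.

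The main obstacle I anticipate is precisely the moving comparator: unlike the standard external-regret analysis, where a single fixed $u$ makes the potential telescope, here $p^t=\prox_f(x^t)$ depends on the iterate and drifts. The entire argument rests on quantifying this drift through \Cref{fact:proximal operator} and the weak-convexity inequality, which are exactly what manufacture the benign $f$-value telescope and the favorable $-\frac{1-\rho}{2\eta_t}\InNorms{p^{t+1}-p^t}^2$ term; for an arbitrary time-varying comparator no such bound could hold, so the proximal structure is indispensable. The secondary technical nuisance is the step-size bookkeeping, where the $1/\eta_t$ weights obstruct clean telescoping and force the summation-by-parts estimate responsible for replacing $f(p^1)-f(p^T)$ by the range $B_f=\max_t f(p^t)-\min_t f(p^t)$ and $\InNorms{x^1-p^1}$ by $D=\max_t\InNorms{x^t-p^t}$.
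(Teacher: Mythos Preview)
Your proposal is correct and follows essentially the same route as the paper: the paper packages your drift estimate $-(b_t-a_{t+1})\le 2(f(p^t)-f(p^{t+1}))-(1-\rho)\InNorms{p^t-p^{t+1}}^2$ as a standalone ``Key Inequality'' (\Cref{lemma:telescope}) and otherwise performs the same one-step GD bound with moving comparator $p^t$, the same step-size bookkeeping via summation by parts, and the same telescoping of $D^2$ and $B_f$ contributions.
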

\begin{remark}\label{remark:GD}
    If the set $\+X$ is bounded, we can directly apply the bound with $D$ being the diameter of $\+X$. When the loss functions are $G$-Lipschitz, we have $\sum_{t=1}^T\frac{\eta_t}{2}\InNorms{\nabla \ell^t(x^t)}^2 \le \frac{G^2}{2} \sum_{t=1}^T \eta_t$.  Without any knowledge of $B_f$, we can choose decreasing step size $\eta_t = \frac{1}{\sqrt{t}}$ or fixed step size $\eta_t = \frac{1}{\sqrt{T}}$, then we get
    \begin{align*}
        \reg_f^T \le (D^2 + B_f + G^2) \cdot \sqrt{T}.
    \end{align*}
    If we know $B_f$ and choose the optimized step size $\eta_t = \sqrt{\frac{D^2+2B_f}{G^2 T}}$, then
    \begin{align*}
        \reg_f^T \le G\sqrt{D^2 + 2B_f}\cdot\sqrt{T}.
    \end{align*}
    We note that $B_f = 0$ if $f$ is an indicator function as for the external regret. This recovers the optimal $O(GD\sqrt{T})$ bound for the external regret~\citep{zinkevich2003online}.
\end{remark}
\begin{remark}
    Our results also generalize to the Bregman setup where the proximal operator is defined using general Bregman divergence instead of the Euclidean distance. In that case, we show that \hyperref[MD]{Mirror Descent} minimizes the Bregman proximal regret.  See \Cref{sec:bregman proximal} for details.
\end{remark}

We note that proximal regret is essentially a form of \emph{dynamic regret}, where we compete not with a fixed action, but a changing sequence of actions. When we follow the classic analysis of \hyperref[GD]{GD}, we will have terms that do not telescope and in the worst-case give $\Omega(T)$ regret. The key observation we make that leads to $\sqrt{T}$ regret is that we can use the structures of proximal operators to upper bound these terms with terms that can telescope. Specifically, we use the following lemma in the proof of \Cref{theorem:GD proximal regret}.
\begin{lemma}[Key Inequality]
\label{lemma:telescope}
Let $x, p \in \+X \subseteq \-R^d$. For $\rho \in [0,1)$ and any $\rho$-weakly convex function $f \in \+F_{\wc}(\+X)$, denote $p_x = \prox_f(x)$, then 
    \begin{align*}
        \InNorms{x - p_x}^2 - \InNorms{x - p}^2  \le 2f(p) -2f(p_x) - (1-\rho)\InNorms{p - p_x}^2.
    \end{align*}
\end{lemma}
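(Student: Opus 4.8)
The plan is to prove the key inequality by combining two ingredients: the first-order optimality condition of the proximal operator (\Cref{fact:proximal operator}) and the subgradient inequality for $\rho$-weakly convex functions recorded in \Cref{sec:pre}. The whole argument is a short chain of elementary manipulations, so I would not expect any serious obstacle; the only point requiring care is tracking the $\rho$ factor so that it lands on the correct term and produces the coefficient $(1-\rho)$.

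First I would invoke \Cref{fact:proximal operator} at the point $x$: since $p_x = \prox_f(x)$, there is a subgradient $v \in \partial f(p_x)$ satisfying $\InAngles{x - v - p_x, x' - p_x} \le 0$ for all $x' \in \+X$. Instantiating this with the particular competitor $x' = p \in \+X$ and rearranging gives
\begin{align*}
\InAngles{x - p_x,\, p - p_x} \le \InAngles{v,\, p - p_x}.
\end{align*}
Next I would control the right-hand side using weak convexity. By the subgradient inequality for $\rho$-weakly convex functions, namely $f(y) \ge f(z) + \InAngles{g, y - z} - \frac{\rho}{2}\InNorms{z - y}^2$ for every $g \in \partial f(z)$, applied at $z = p_x$ with the subgradient $v$ and the point $y = p$, we obtain $\InAngles{v, p - p_x} \le f(p) - f(p_x) + \frac{\rho}{2}\InNorms{p - p_x}^2$. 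Chaining this with the previous display yields the scalar bound
\begin{align*}
\InAngles{x - p_x,\, p - p_x} \le f(p) - f(p_x) + \frac{\rho}{2}\InNorms{p - p_x}^2.
\end{align*}

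Finally I would convert the left-hand side of the claimed inequality into this inner product via the exact identity $\InNorms{x - p_x}^2 - \InNorms{x - p}^2 = 2\InAngles{x - p_x, p - p_x} - \InNorms{p - p_x}^2$, which follows by expanding $\InNorms{x - p}^2 = \InNorms{(x - p_x) + (p_x - p)}^2$. Substituting the scalar bound above for the inner product term gives
\begin{align*}
\InNorms{x - p_x}^2 - \InNorms{x - p}^2 \le 2f(p) - 2f(p_x) + \rho\InNorms{p - p_x}^2 - \InNorms{p - p_x}^2,
\end{align*}
and collecting the last two terms into $-(1-\rho)\InNorms{p - p_x}^2$ is exactly the desired statement. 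The hypothesis $\rho < 1$ is what makes this final coefficient strictly negative, which is precisely the feature that, in the proof of \Cref{theorem:GD proximal regret}, converts an otherwise non-telescoping remainder into a negative term that can be summed and dropped.
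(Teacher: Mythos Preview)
Your proof is correct and follows essentially the same approach as the paper: both use \Cref{fact:proximal operator} to bound $\InAngles{x - p_x, p - p_x}$ by $\InAngles{v, p - p_x}$, apply the $\rho$-weak convexity subgradient inequality, and combine with the identity $\InNorms{x - p_x}^2 - \InNorms{x - p}^2 = 2\InAngles{x - p_x, p - p_x} - \InNorms{p - p_x}^2$. The only difference is the order in which you apply the identity and the two inequalities, which is cosmetic.
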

\begin{proof}
    Since $p_x = \prox_f(x)$, we know by \Cref{fact:proximal operator} that there exists a subgradient $v \in \partial f(p_x)$ (if $f$ is differentiable, then $v = \nabla f(p_x)$ is its gradient) such that $\InAngles{p - p_x, x - v - p_x} \le 0$. Then we have 
    \begin{align*}
        &\InNorms{x - p_x}^2 - \InNorms{x - p}^2 \\
        &= 2 \InAngles{p - p_x, x - p_x} - \InNorms{p - p_x}^2 \\
        &= 2 \InAngles{p - p_x, v} + 2\InAngles{p - p_x, x -v - p_x} - \InNorms{p - p_x}^2 \\
        & \le 2 \InAngles{p - p_x, v}- \InNorms{p - p_x}^2\\
        & \le 2f(p) - 2 f(p_x) - (1-\rho) \InNorms{p - p_x}^2,
    \end{align*}
    where the last inequality is due to $f$ is $\rho$-weakly convex and $f(p) \ge f(p_x) + \InAngles{p - p_x, v}- \frac{\rho}{2}\InNorms{p - p_x}^2$.
\end{proof}

\begin{proof}[Proof of \Cref{theorem:GD proximal regret}]
    We define $p^t := \prox_f(x^t)$ for $t \in [1,T]$. By standard analysis of \hyperref[GD]{Online Gradient Descent} (see \emph{e.g.}, the proof of \citealp[Theorem 3.2]{bubeck2015convex} and \citep[Theorem 2.13]{orabona2019modern}), we have
    \begin{align*}
        \sum_{t=1}^T \ell^t(x^t) - \ell^t(p^t) \le  \sum_{t=1}^T \InAngles{\nabla \ell^t(x^t), x^t - p^t}\le \sum_{t=1}^T \frac{1}{2\eta_t} \InParentheses{\InNorms{x^t- p^t}^2 -\InNorms{x^{t+1} - p^t}^2} + \frac{\eta_t}{2}\InNorms{\nabla \ell^t(x^t)}^2.
    \end{align*}
    Since $\eta_t$ is non-increasing, we simplify the above inequality
    \begin{align}
       &\sum_{t=1}^T \InAngles{\nabla \ell^t(x^t), x^t - p^t} \nonumber \\
        &\le \sum_{t=1}^{T} \frac{1}{2\eta_t} \InParentheses{\InNorms{x^{t}- p^{t}}^2 -\InNorms{x^{t+1} - p^t}^2} + \sum_{t=1}^T\frac{\eta_t}{2}\InNorms{\nabla\ell^t(x^t)}^2 \nonumber\\
        &= \frac{\InNorms{x^1-p^1}^2}{2\eta_1} + \sum_{t=1}^{T-1} \InParentheses{ \frac{1}{2\eta_t} \InParentheses{\InNorms{x^{t+1}- p^{t+1}}^2 -\InNorms{x^{t+1} - p^t}^2}  + \InNorms{x^{t+1} - p^{t+1}}^2 \InParentheses{ \frac{1}{2\eta_{t+1}}-\frac{1}{2\eta_t}}  }\nonumber \\
        & \quad -\frac{1}{2\eta_T} \InNorms{x^{T+1} - p^T}^2 + \sum_{t=1}^T\frac{\eta_t}{2}\InNorms{\nabla\ell^t(x^t)}^2 \nonumber \\
        & \le \frac{D^2 - \InNorms{x^{T+1} - p^T}^2}{2\eta_T} + \sum_{t=1}^T\frac{\eta_t}{2}\InNorms{\nabla\ell^t(x^t)}^2 + \sum_{t=1}^{T-1} \frac{1}{2\eta_t} \InParentheses{ \InNorms{x^{t+1} - p^{t+1}}^2 - \InNorms{x^{t+1} - p^t}^2}. \label{eq:gd-1}
    \end{align}
    where in the last step we use $\max_{t \in [T]} \InNorms{x^t -  p^t} \le D$. Also note that for constant step size $\eta_t = \eta$, the above inequality holds with $D = \InNorms{x^1 - p^1}^2$. 
    \paragraph{Key Step}  Now we focus on the term $\InNorms{x^{t+1} - p^{t+1}}^2 - \InNorms{x^{t+1} - p^t}^2$ which does not telescope. The key observation is the property of the proximal operator (\Cref{lemma:telescope})  enables us to replace the non-telescoping term with an upper bound that telescopes. 
    
    
    Recall that for any $t$, $p^{t+1} = \prox_f(x^{t+1})$. By \Cref{lemma:telescope}, we have
    \begin{align}
        \InNorms{x^{t+1} - p^{t+1}}^2 - \InNorms{x^{t+1} - p^t}^2   \le 2(f(p^t) - f(p^{t+1})) - (1-\rho)\InNorms{p^t - p^{t+1}}^2. \label{eq:telescope inequality-1}
    \end{align}
    We can then plug \eqref{eq:telescope inequality-1} back to \eqref{eq:gd-1}. Let $f^* = \min_{t \in [T]} f(p^t)$, we have
    \begin{align*}
        &\sum_{t=1}^T \InAngles{\nabla \ell^t(x^t), x^t - p^t}\\ &\le \frac{D^2 - \InNorms{x^{T+1} - p^T}^2}{2\eta_T} + \sum_{t=1}^T\frac{\eta_t}{2}\InNorms{\nabla\ell^t(x^t)}^2 + \sum_{t=1}^{T-1} \frac{1}{\eta_t} \InParentheses{ f(p^t) - f^* - (f(p^{t+1}) - f^*)} - \sum_{t=1}^{T-1} \frac{(1-\rho)}{2\eta_t} \InNorms{p^t- p^{t+1}}^2 \\
        &= \frac{D^2 - \InNorms{x^{T+1} - p^T}^2}{2\eta_T} + \sum_{t=1}^T\frac{\eta_t}{2}\InNorms{\nabla\ell^t(x^t)}^2 + \sum_{t=1}^{T-1}\InParentheses{ \frac{f(p^t)-f^*}{\eta_t} - \frac{f(p^{t+1})-f^*}{\eta_{t+1}}}  \\
        & \quad \quad + \sum_{t=1}^{T-1} (f(p^{t+1})-f^*)\InParentheses{\frac{1}{\eta_{t+1}} - \frac{1}{\eta_t}} - \sum_{t=1}^{T-1} \frac{(1-\rho)}{2\eta_t} \InNorms{p^t- p^{t+1}}^2\\
        &\le \frac{D^2 + 2B_f - \InNorms{x^{T+1} - p^T}^2}{2\eta_T} + \sum_{t=1}^T\frac{\eta_t}{2}\InNorms{\nabla\ell^t(x^t)}^2 - \sum_{t=1}^{T-1} \frac{(1-\rho)}{2\eta_t} \InNorms{p^t- p^{t+1}}^2,
    \end{align*}
    where in the last inequality, we use $\{\eta_t\}$ is non-increasing again and $f(p^t) -f^*\le B_f$ for all $p^t$ since $B_f  = \max_{t \in [T]} f(p^t) - \min_{t \in [T]} f(p^t)$. 
    Note that when the step size is constant $\eta_t = \eta$, the same inequality holds for $B_f = f(p^1) - f(p^T)$.
\end{proof}

\subsection{Symmetric Linear Swap Regret}
\label{sec:symmetric linear swap regret}
In this section, we show that upper bounds on proximal regret imply upper bounds on \emph{symmetric linear swap regret} and that \hyperref[GD]{GD} has $O(\sqrt{T})$ symmetric linear swap regret. This result is non-trivial without using the proximal operators as a proxy for symmetric linear transformations.

We recall that linear swap regret~\citep{daskalakis2025efficient} is $\Phi$-regret where $\Phi$ contains all the affine \emph{endomorphisms} over $\+X$: an affine strategy modification $\phi: x \rightarrow Ax + b$ is an \emph{endomorphism} if $\phi$ maps from $\+X$ to $\+X$ itself. Symmetric linear swap regret is linear swap regret with the additional assumption that $A$ is symmetric. We show that symmetric linear swap regret can be instantiated as proximal regret after applying a transformation that preserves the regret, which by \Cref{theorem:GD proximal regret} implies that \hyperref[GD]{GD} has $O(\sqrt{T})$ symmetric linear swap regret. This result is presented as \Cref{thm:linear swap}. We remark that although the work~\citep{daskalakis2025efficient} has presented an efficient learning algorithm that minimizes the general linear swap regret, their algorithm heavily uses the ellipsoid method and is considerably more complicated than \hyperref[GD]{GD}. We find 
the fact that the simple \hyperref[GD]{GD} algorithm minimizes the symmetric linear swap regret surprising, as it simultaneously minimizes other proximal regret for general weakly convex functions. Since \hyperref[GD]{GD} is widely applied in practice, we believe that providing stronger guarantees for this classic algorithm is an important contribution.

\paragraph{Characterization of Symmetric Affine Endomorphisms using Proximal Operators} Let $\+X \subset \-R^d$ be a compact convex set. We consider symmetric affine endomorphisms $\phi: x \rightarrow Ax + b$ such that $A \in \-R^{d \times d}$ is symmetric. In the following proposition, we present two sufficient conditions on $A$ under which the corresponding linear transformation $\phi$ can be expressed as a proximal operator $\prox_f$ for a weakly convex function $f$.
\begin{proposition}\label{prop:affine transformation}
    For any symmetric affine endomorphism $\phi(x) = Ax +b$ over $\+X$, we have $\phi=\prox_f$ for $f(x) = \frac{1}{2} x^\top (A^{-1}-I)x - (A^{-1}b)^\top x$ if $A$ is positive definite (PD) and satisfies either one of the two conditions below:
\begin{itemize}
\item[1.] The largest eigenvalue of $A$ is $\sigma_{\max}(A) \le 1$; in this case, the corresponding $f$ is convex.
\item[2.] The smallest eigenvalue of $A$ is $\sigma_{\min}(A) > \frac{1}{2}$; in this case, the corresponding $f$ is $L$-smooth with $L < 1$. 
\end{itemize}
\end{proposition}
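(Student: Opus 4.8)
The plan is to separate the claim into two independent verifications: (a) that $f(x)=\frac12 x^\top(A^{-1}-I)x-(A^{-1}b)^\top x$ belongs to $\+F_{\wc}(\+X)$ under each of the two spectral conditions, so that $\prox_f$ is well-defined; and (b) that the candidate point $p:=\phi(x)=Ax+b$ satisfies the first-order optimality condition of \Cref{fact:proximal operator}, which pins down $\prox_f(x)$. The entire content is that $f$ has been reverse-engineered so that step (b) collapses to a trivial identity.

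For (a), I would note that $A$ symmetric and positive definite makes $A^{-1}$ symmetric positive definite, and that $f$ is a quadratic with constant Hessian $\nabla^2 f=A^{-1}-I$. Under Condition 1, $\sigma_{\max}(A)\le 1$ is equivalent to $A^{-1}\succeq I$, i.e. $\nabla^2 f\succeq 0$, so $f$ is convex and hence $0$-weakly convex. Under Condition 2, $\sigma_{\min}(A)>\tfrac12$ is equivalent to $\sigma_{\max}(A^{-1})<2$, which forces every eigenvalue of $A^{-1}-I$ into $(-1,1)$; thus $\InNorms{A^{-1}-I}_{\mathrm{op}}<1$, making $f$ an $L$-smooth function with $L<1$, hence $\rho$-weakly convex with $\rho=L<1$. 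Either way $f\in\+F_{\wc}(\+X)$, so $\prox_f$ exists and is unique.

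For (b), since $f$ is differentiable and $A$ is positive definite, the map $g(x'):=f(x')+\frac12\InNorms{x'-x}^2$ is strongly convex (its Hessian is $A^{-1}\succ 0$, which is exactly where $\rho<1$ enters), so $\prox_f(x)$ is its unique minimizer over $\+X$. By \Cref{fact:proximal operator} it then suffices to produce $v\in\partial f(p)$ with $\InAngles{x-v-p,\,x'-p}\le 0$ for all $x'\in\+X$. I would take $v=\nabla f(p)$ and use the key identity $\nabla f(p)=(A^{-1}-I)(Ax+b)-A^{-1}b=x-p$, which gives $x-v-p=0$ and makes the inequality hold with equality for every $x'$. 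The one place the hypotheses are genuinely used beyond well-definedness is that $\phi$ is an \emph{endomorphism}: this guarantees $p=\phi(x)\in\+X$, so the candidate is feasible; without it the unconstrained solution could leave $\+X$ and the projection would distort $\phi$.

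There is no serious obstacle here beyond bookkeeping, since the computation is elementary. The only substantive points to get right are that the two spectral conditions are precisely the ones placing $f$ in $\+F_{\wc}$ with $\rho<1$ (so I would carefully check the eigenvalue inequalities translating $\sigma_{\min},\sigma_{\max}$ of $A$ into bounds on the Hessian $A^{-1}-I$), and that the affine offset $-(A^{-1}b)^\top x$ together with the quadratic coefficient $A^{-1}-I$ are chosen exactly so that $\nabla f(Ax+b)=x-(Ax+b)$.
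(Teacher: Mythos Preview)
Your proposal is correct and takes essentially the same approach as the paper: verify $f\in\+F_{\wc}$ via the Hessian $\nabla^2 f=A^{-1}-I$ under each spectral condition, then check that the candidate $p=Ax+b\in\+X$ satisfies the first-order optimality condition for the strongly convex objective $f(x')+\tfrac12\InNorms{x'-x}^2$. The paper phrases step (b) by directly computing $\nabla F_x(Ax+b)=0$ and invoking strict convexity, whereas you route through \Cref{fact:proximal operator}; since that fact only states the necessary direction, it is your separate observation that the objective is strongly convex (Hessian $A^{-1}\succ 0$) that actually justifies sufficiency---but you do make that observation, so the argument is complete.
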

    \begin{proof}
         We first show that $\prox_f$ is well-defined. We note that $f$ is a quadratic function.
        \begin{itemize}
            \item if $A$ is PD with $\sigma_{\max}(A) \le 1$, then $\nabla^2f=A^{-1}-I$ is positive semidefinite. Therefore, $f$ is convex;
            \item if $A$ is PD with $\sigma_{\min}(A) > \frac{1}{2}$, then $\InNorms{\nabla^2 f}_2=\InNorms{A^{-1}-I}_2 \in (\frac{1}{\sigma_{\max}}-1, \frac{1}{\sigma_{\min}}-1) \in (-1,1)$. Therefore, $f$ is $L$-smooth with $L < 1$.
        \end{itemize}
        Thus $\prox_f$ is well-defined. Fix any $x \in \+X$. Recall that 
        \begin{align*}
            \prox_f(x) = \argmin_{x' \in \+X} F_x(x'): = f(x') + \frac{1}{2}\InNorms{x' - x}^2.
        \end{align*}
        Let us denote $x^* = \phi(x) = Ax + b \in \+X$. Then
        \begin{align*}
            \nabla F_x(x^*) &= (A^{-1} - I) x^* - A^{-1} b + x^* - x \tag{since $A$ is symmetric} \\
            & = A^{-1} (x^* - b)  - x\\
            & = x - x = 0.
        \end{align*}
        Since $F_x(x')$ is a strictly convex function, we know $x^* = Ax + b \in \+X$ is its unique minimizer. Thus $\prox_f(x) = Ax+b = \phi(x)$. 
    \end{proof}
\paragraph{Proximal Regret Bounds Symmetric Linear Swap Regret} Now we proceed to show that \hyperref[GD]{GD} efficiently minimizes symmetric linear swap regret. For simplicity of the analysis, we assume the set $\+X$ lies in the $\ell_2$-ball $B_d(0, D)$. This assumption is without loss of generality, as any bounded convex set can be transposed to such a position~\citep[Lemma A.1]{daskalakis2025efficient}. Moreover, in this case, any endomorphism $\phi:Ax +b$ has bounded $\InNorms{A}_2$ and $\InNorms{b}$~\citep[Lemma B.2]{daskalakis2025efficient}.

We remark that we cannot directly apply \Cref{prop:affine transformation} to any affine transformation $\phi: Ax + b$ since the matrix $A$ may not be positive definite and may not satisfy the two sufficient conditions in \Cref{prop:affine transformation}. The main idea is that we can consider an auxiliary symmetric affine endomorphism by interpolating $\phi$ with the identity mapping $\mathrm{Id}$: $\phi_\alpha := (1-\alpha) \mathrm{Id} + \alpha \phi$ such that $\phi_\alpha(x) = A_\alpha x+ \alpha b$ where $A_\alpha = (1-\alpha)I + \alpha A$. On one hand, if we choose $\alpha = \frac{1}{3(1+\InNorms{A}_2)}$, then $A_\alpha$ is PD and satisfies item 2 in \Cref{prop:affine transformation}. On the other hand,  $\reg_\phi(T)$ is bounded by $\frac{1}{\alpha}\reg_{\phi_\alpha}(T)$ for any $T \ge 1$:
\begin{align*}
    \reg_{\phi}(T) \le \sum_{t=1}^T \InAngles{\nabla \ell^t(x^t) ,x^t - \phi(x^t)} = \frac{1}{\alpha} \sum_{t=1}^T \InAngles{\nabla \ell^t(x^t) ,x^t - \phi_\alpha(x^t)} = \frac{1}{\alpha} \cdot \reg_{\phi_\alpha}(T).
    \end{align*}
Then it suffices to show that \hyperref[GD]{GD} minimizes $\reg_{\phi_\alpha}(T)$, which holds by \Cref{theorem:GD proximal regret}. For simplicity of the analysis, we assume the set $\+X$ lies in the $\ell_2$-ball $B_d(0, D)$. This assumption is without loss of generality, as any bounded set can be transposed to such a position (see also~\citep[]{daskalakis2025efficient}). 

Formally, we have the following theorem.

\begin{theorem}[Gradient Descent Minimizes Symmetric Linear Swap Regret]\label{thm:linear swap}
Let $\+X \subseteq B_d(0,D)$ be a compact convex set and $\{\ell^t: \+X \rightarrow \-R\}$ be a sequence of convex and $G$-Lipschitz loss functions. Let $\{x^t\}$ be the iterates of \hyperref[GD]{GD} with step size $\eta_t = \frac{1}{\sqrt{t}}$.   Then for any symmetric linear endomorphism $\phi(x) = Ax + b$ over $\+X$, we have  
\[
    \reg_\phi(T)  \le  3(1+\InNorms{A}_2)(4D^2+ D\InNorms{b} + G^2) \cdot \sqrt{T}.
\]
\end{theorem}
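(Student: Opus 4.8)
The plan is to formalize the interpolation argument sketched just above the statement: reduce the swap regret of an arbitrary symmetric endomorphism $\phi(x)=Ax+b$ to the proximal regret of a single well-behaved weakly convex function, and then invoke \Cref{theorem:GD proximal regret}. Since $A$ need not be positive definite nor satisfy the spectral conditions of \Cref{prop:affine transformation}, the first step is to damp $\phi$ toward the identity. I set $\alpha = \tfrac{1}{3(1+\InNorms{A}_2)}$ and define the auxiliary endomorphism $\phi_\alpha := (1-\alpha)\mathrm{Id} + \alpha\phi$, so that $\phi_\alpha(x) = A_\alpha x + \alpha b$ with $A_\alpha = (1-\alpha)I + \alpha A$; note $\phi_\alpha$ is again an endomorphism since $\phi_\alpha(x)$ is a convex combination of $x\in\+X$ and $\phi(x)\in\+X$. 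Because $A$ is symmetric, its eigenvalues $\lambda_i$ are real with $|\lambda_i|\le \InNorms{A}_2$, so the eigenvalues $(1-\alpha)+\alpha\lambda_i$ of $A_\alpha$ lie in $[\,1-\alpha(1+\InNorms{A}_2),\,1+\alpha(\InNorms{A}_2-1)\,]\subseteq[\tfrac23,\tfrac43)$. Thus $A_\alpha$ is positive definite with $\sigma_{\min}(A_\alpha)\ge \tfrac23 > \tfrac12$, and item~2 of \Cref{prop:affine transformation} applies: $\phi_\alpha = \prox_f$ for the function $f(x) = \tfrac12 x^\top(A_\alpha^{-1}-I)x - (A_\alpha^{-1}\alpha b)^\top x$, which is $L$-smooth with $L<1$, hence $f\in\+F_{\wc}(\+X)$.

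The second step is the regret-scaling identity. Since $\phi_\alpha(x) = (1-\alpha)x + \alpha\phi(x)$, we have $x - \phi_\alpha(x) = \alpha\,(x-\phi(x))$ pointwise. Using convexity of each $\ell^t$ to linearize,
\begin{align*}
\reg_\phi(T) = \sum_{t=1}^T \InParentheses{\ell^t(x^t) - \ell^t(\phi(x^t))} \le \sum_{t=1}^T \InAngles{\nabla\ell^t(x^t), x^t - \phi(x^t)} = \frac{1}{\alpha}\sum_{t=1}^T \InAngles{\nabla\ell^t(x^t), x^t - \prox_f(x^t)},
\end{align*}
where the last equality uses $x^t - \phi_\alpha(x^t) = \alpha(x^t - \phi(x^t))$ together with $\phi_\alpha = \prox_f$. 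The right-hand sum is exactly the linearized proximal regret controlled by \Cref{theorem:GD proximal regret}; specializing to the step size $\eta_t = 1/\sqrt t$ and $G$-Lipschitz losses, \Cref{remark:GD} gives $\sum_{t} \InAngles{\nabla\ell^t(x^t), x^t - \prox_f(x^t)} \le (\mathrm{diam}(\+X)^2 + B_f + G^2)\sqrt T$, and $\mathrm{diam}(\+X)\le 2D$ because $\+X\subseteq B_d(0,D)$.

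It remains to bound $B_f = \max_t f(p^t) - \min_t f(p^t)$ over the points $p^t = \prox_f(x^t)\in\+X$, and this is the only genuinely computational step. Writing $M = A_\alpha^{-1}-I$ and $c = A_\alpha^{-1}\alpha b$, symmetry of $M$ gives $f(p)-f(q) = \tfrac12 (p+q)^\top M (p-q) - c^\top(p-q)$ for $p,q\in\+X$. The spectral bounds $\tfrac23\le\sigma_{\min}(A_\alpha)$ and $\sigma_{\max}(A_\alpha)<\tfrac43$ yield $\InNorms{M}_2 < 1$ and $\InNorms{c}_2 \le \tfrac{3}{2}\alpha\InNorms{b} = \tfrac{\InNorms{b}}{2(1+\InNorms{A}_2)}$; combined with $\InNorms{p+q}\le 2D$ and $\InNorms{p-q}\le 2D$ this bounds $B_f$ by a quantity of order $D^2 + \tfrac{D\InNorms{b}}{1+\InNorms{A}_2}$. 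Multiplying the displayed chain through by $1/\alpha = 3(1+\InNorms{A}_2)$ and collecting terms then produces the claimed inequality $\reg_\phi(T)\le 3(1+\InNorms{A}_2)(4D^2 + D\InNorms{b} + G^2)\sqrt T$. The main obstacle is precisely this last bookkeeping: verifying the two-sided spectral bounds on $A_\alpha$ and controlling the oscillation $B_f$ of the quadratic $f$ tightly enough that, after the $1/\alpha$ blow-up, the $\InNorms{A}_2$-dependence stays confined to the stated prefactor rather than contaminating the $D\InNorms{b}$ term.
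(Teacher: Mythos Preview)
Your approach is essentially the paper's: the interpolation $\phi_\alpha$ with $\alpha=\tfrac{1}{3(1+\InNorms{A}_2)}$, the eigenvalue check placing $\sigma(A_\alpha)\subseteq[\tfrac23,\tfrac43)$, the invocation of \Cref{prop:affine transformation}, the linearized scaling identity $\reg_\phi\le\tfrac1\alpha\reg_{\phi_\alpha}$, and the appeal to \Cref{theorem:GD proximal regret}/\Cref{remark:GD} all match the paper's proof exactly.

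The only point of divergence is how $B_f$ is bounded. The paper isolates this as \Cref{lemma:bounds on Bf} and exploits the relation $p^t=A_\alpha x^t+\alpha b$ to rewrite $A_\alpha^{-1}(p-p')=x-x'$ \emph{before} applying Cauchy--Schwarz, which avoids ever estimating $\InNorms{A_\alpha^{-1}-I}_2$ and lands on $B_f\le 3D^2+D\InNorms{b}$. Your direct spectral bound $\InNorms{M}_2<1$ (in fact $\le\tfrac12$) and $\InNorms{c}\le\tfrac32\alpha\InNorms{b}$ gives $B_f\lesssim 2D^2+\tfrac{D\InNorms{b}}{1+\InNorms{A}_2}$; combined with your (correct) $\mathrm{diam}(\+X)^2\le 4D^2$, after the $1/\alpha$ blow-up you obtain $3(1+\InNorms{A}_2)(cD^2+D\InNorms{b}+G^2)\sqrt T$ with $c\in\{5,6\}$ rather than $4$. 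So the claim that collecting terms ``produces the claimed inequality'' is slightly optimistic on the constant, though the argument is otherwise complete and the shape is identical.
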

\begin{remark}
    When $\+X = \Delta^d$ is the $d$-dimensional simplex. Then all the (symmetric) linear endomorphisms over $\Delta^d$ are in the form $\phi(x) = Ax$ where $A$ is a column-stochastic matrix.  In this case, the set $\Delta^d \subseteq B_d(0, 1)$ and we have $D = 1$ and $\InNorms{A}_2 \le 1$. Therefore,  the symmetric swap regret of \hyperref[GD]{GD} with step size $\eta_t = \frac{1}{\sqrt{t}}$ over the simplex $\Delta^d$ against convex and $G$-Lipschitz losses is bounded by $6(4+G^2)\sqrt{T}$ for any $T \ge 1$. This can be improved to $O(G\sqrt{T})$ if we know $G$ and use the fixed step size $\eta = \frac{1}{G\sqrt{T}}$. 
\end{remark}
\begin{proof}[Proof of \Cref{thm:linear swap}]
    Given any symmetric affine endomorphism $\phi(x) = Ax +b$, we consider a new symmetric affine endomorphism $\phi_\alpha := (1-\alpha) \mathrm{Id} + \alpha \phi$ such that $\phi_\alpha(x) = A_\alpha x+ \alpha b$ where $A_\alpha = (1-\alpha)I + \alpha A$. We note that $\reg_\phi(T)$ is bounded by $\reg_{\phi_\alpha}(T)$ for any $T \ge 1$:
    \begin{align*}
    \reg_{\phi}(T) \le \sum_{t=1}^T \InAngles{\nabla \ell^t(x^t) ,x^t - \phi(x^t)} = \frac{1}{\alpha} \sum_{t=1}^T \InAngles{\nabla \ell^t(x^t) ,x^t - \phi_\alpha(x^t)} = \frac{1}{\alpha} \cdot \reg_{\phi_\alpha}(T).
    \end{align*}
    Then it suffices to show that \hyperref[GD]{GD} minimizes $\reg_{\phi_\alpha}(T)$. 
    
    Choosing $\alpha = \frac{1}{3(1+\InNorms{A}_2)}$, we have the eigenvalues of $A_\alpha$ are at least $1-\alpha - \alpha \InNorms{A}_2 = \frac{2}{3} > \frac{1}{2}$. Thus $A_\alpha$ is positive definite and satisfies item 2 in \Cref{prop:affine transformation}. Therefore, the corresponding $\phi_\alpha$-regret $\reg_{\phi_\alpha}$ can be instantiated as proximal regret $\reg_{\prox_f}$ with a smooth $f$ as defined in \Cref{prop:affine transformation}. By \Cref{theorem:GD proximal regret} and \Cref{remark:GD}, we have \hyperref[GD]{GD} with step size $\eta_t = \frac{1}{\sqrt{t}}$ satisfies \[\reg_{\phi_\alpha}(T) \le (D^2+B_f+G^2)\sqrt{T}.\] for any $T \ge 1$. Moreover, by \Cref{lemma:bounds on Bf}, for any $T \ge 1$, we have
    \begin{align*}
        B_f &:= \max_{t \in [T]}f(\prox_f(x^t)) - \min_{t \in [T]}f(\prox_f(x^t)) \le 3D^2 + D\InNorms{b}.
    \end{align*}
   Combining the above inequalities, we have
   \begin{align*}
       \reg_\phi(T) \le \frac{1}{\alpha} \reg_{\phi_\alpha}(T) \le  3(1+\InNorms{A}_2)(4D^2+ D\InNorms{b} + G^2) \cdot \sqrt{T}.
   \end{align*}
   This completes the proof.
\end{proof}

\begin{lemma}\label{lemma:bounds on Bf}
    Let $\phi(x) = Ax +b$ be a linear endomorphism over $\+X \subseteq B_d(0,D)$. Let $\alpha = \frac{1}{3(1+\InNorms{A}_2)}$ and $\phi_\alpha(x) = A_\alpha x + \alpha b$ where $A_\alpha = (1-\alpha)\mathrm{Id} + \alpha A$. Let $f(x) = \frac{1}{2} x^\top (A_\alpha^{-1}-I)x - (\alpha A_\alpha^{-1}b)^\top x$ be the smooth function such that $\prox_f(x) = \phi_\alpha(x)$ for $x \in \+X$ (defined in \Cref{prop:affine transformation}). Then we have
    \begin{align*}
        \max_{x \in \+X} f(\prox_f(x)) - \min_{x' \in \+X} f(\prox(x')) \le 3D^2 + D\InNorms{b}.
    \end{align*}
\end{lemma}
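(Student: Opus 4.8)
The plan is to use that $\prox_f$ agrees with $\phi_\alpha$ on $\+X$ (\Cref{prop:affine transformation}), so that $f(\prox_f(x)) = f(\phi_\alpha(x))$, and then to reduce the desired range bound to controlling the quadratic $f$ over the Euclidean ball $B_d(0,D)$ rather than over the more complicated image set.

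First I would observe that $\phi_\alpha = (1-\alpha)\mathrm{Id} + \alpha\phi$ is itself an endomorphism of $\+X$: for any $x \in \+X$, the point $\phi_\alpha(x) = (1-\alpha)x + \alpha\phi(x)$ is a convex combination of $x \in \+X$ and $\phi(x) \in \+X$, hence lies in the convex set $\+X$. Therefore $\prox_f(\+X) = \phi_\alpha(\+X) \subseteq \+X \subseteq B_d(0,D)$, and consequently
\[
\max_{x\in\+X} f(\prox_f(x)) - \min_{x'\in\+X} f(\prox_f(x')) \le \max_{\InNorms{y}\le D} f(y) - \min_{\InNorms{y'}\le D} f(y').
\]
This step is what lets me avoid expanding the composition $f\circ\phi_\alpha$ as a quadratic in $x$; I then only need to estimate $f$ over a ball.

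Next, writing $f(y) = \frac12 y^\top M y - c^\top y$ with $M = A_\alpha^{-1}-I$ and $c = \alpha A_\alpha^{-1}b$, I would bound $|f(y)| \le \frac12\InNorms{M}_2 D^2 + \InNorms{c} D$ for every $\InNorms{y}\le D$, so the range on the right-hand side above is at most $\InNorms{M}_2 D^2 + 2\InNorms{c}D$. The two operator-norm quantities are controlled by the eigenvalue estimates on $A_\alpha$ already established in the proof of \Cref{thm:linear swap}: with $\alpha = \frac{1}{3(1+\InNorms{A}_2)}$, the matrix $A_\alpha$ satisfies item 2 of \Cref{prop:affine transformation}, so $f$ is $L$-smooth with $L<1$, i.e. $\InNorms{M}_2 = \InNorms{\nabla^2 f}_2 < 1$; and since the eigenvalues of $A_\alpha$ are at least $\frac23$, we get $\InNorms{A_\alpha^{-1}}_2 \le \frac32$ and hence $\InNorms{c} = \alpha\InNorms{A_\alpha^{-1}b} \le \frac{1}{3(1+\InNorms{A}_2)}\cdot\frac32\InNorms{b} \le \frac12\InNorms{b}$. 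Plugging these in gives a range of at most $D^2 + D\InNorms{b}$, which is well within the claimed $3D^2 + D\InNorms{b}$.

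The only delicate ingredient is the dimension-free eigenvalue control of $A_\alpha$, which is exactly where the choice $\alpha = \frac{1}{3(1+\InNorms{A}_2)}$ and the boundedness of $\InNorms{A}_2$ (guaranteed for an endomorphism of a set inside $B_d(0,D)$ by \citep[Lemma B.2]{daskalakis2025efficient}) enter; once those are in hand, the remaining estimates are routine norm bounds for a quadratic on a ball, and the endomorphism observation is what keeps the argument short.
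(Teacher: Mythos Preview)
Your argument is correct and actually yields a slightly sharper bound ($D^2 + D\InNorms{b}$) than the paper's $3D^2 + D\InNorms{b}$, but it proceeds differently. The paper computes $f(y)-f(y')$ directly for $y=\phi_\alpha(x)$ and $y'=\phi_\alpha(x')$, using the algebraic identity $A_\alpha^{-1}(y-y') = x-x'$ to simplify $(y+y')^\top(A_\alpha^{-1}-I)(y-y')$ into $(y+y')^\top(x-x') - \InNorms{y}^2 + \InNorms{y'}^2$, and then applies Cauchy--Schwarz with $x,x',y,y'\in B_d(0,D)$; no operator-norm estimate on $A_\alpha^{-1}-I$ is invoked. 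Your route instead bounds the range of $f$ over the whole ball via $\InNorms{A_\alpha^{-1}-I}_2<1$ and $\InNorms{A_\alpha^{-1}}_2\le\frac32$, which are consequences of the eigenvalue control in \Cref{prop:affine transformation} and \Cref{thm:linear swap}. What your approach buys is that no algebraic unwinding of the quadratic difference is needed, and the constants come out better; what the paper's approach buys is that it avoids invoking the smoothness constant of $f$ and keeps all estimates at the level of inner products. Both tacitly use that $\phi_\alpha(\+X)\subseteq\+X\subseteq B_d(0,D)$; you make this step explicit, which is a nice clarification.
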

\begin{proof}
    For any $x, x' \in \+X$, denote $y = \prox_f(x) = A_\alpha x+\alpha b$ and $y' = \prox_f(x') = A_\alpha x'+\alpha b$. We have $y-y'= A_\alpha(x-x')$ and thus $x-x' = A_\alpha^{-1}(y-y')$. Then by the definition of $f$, we have
    \begin{align*}
        &f(\prox_f(x)) - f(\prox(x')) \\
        &= f(y) - f(y') \\
        &= \frac{1}{2} y^\top (A_\alpha^{-1} - I)y - \frac{1}{2} y'^\top (A_\alpha^{-1} - I)y' - (\alpha A_\alpha^{-1}b)^\top (y-y')  \\
        &= \frac{1}{2}(y+y')^\top(A_\alpha^{-1} -I)(y-y') - \alpha b^\top (x -x')\\
        &= \frac{1}{2}(y+y')^\top(x - x') - \frac{1}{2}\InNorms{y}^2 + \frac{1}{2}\InNorms{y'}^2 -\alpha b^\top (x -x')\\
        & \le 3D^2 + 2\alpha D \InNorms{b},
    \end{align*}
    where we use Cauchy-Schwarz in the last inequality and the fact that $x,x',y,y' \in B_d(0,D)$. Plugging $\alpha = \frac{1}{3(1+\InNorms{A}_2)}$, we have
    $f(\prox_f(x)) - f(\prox(x')) \le 3D^2 + D\InNorms{b}.$
\end{proof}

\section{Improved Regret and Faster Convergence in Games}
\label{sec:OG game setting}
Any online algorithm suffers $\Omega(\sqrt{T})$ proximal regret in the worst case since proximal regret covers the external regret as a special case. This lower bound, however, holds only in the \emph{adversarial} setting.
In this section, we study improved regret and convergence in the \emph{game} setting, where players interact with each other using the same algorithm. 
We remark that improved regret guarantees have been shown for the external regret~\citep{syrgkanis2015fast,chen2020hedging,daskalakis2021near-optimal, farina2022near} in general convex games, giving $O(\log T)$ individual external regret and $O(1)$ social external regret bound. However, results beyond the external regret are sparse. Our result fits in the line of literature as an extension from external to proximal regret for general convex games. Specifically, for general convex games, we obtain $\reg_{\+F_{\convex}}^T=O(T^{-1/4})$ individual proximal regret bound for convex functions $\+F_{\convex}$, and $O(1)$ social proximal regret $\reg_{\+F_{\strc}}^T$ for strongly convex functions $\+F_{\strc}$, which is stronger than external regret. Both results are new and complement existing results.

\paragraph{Optimistic Gradient} We study the Optimistic Gradient Descent \eqref{OG} algorithm~\citep{rakhlin2013optimization}, an optimistic variant of \hyperref[GD]{GD} that has been shown to have improved $O(T^{1/4})$ individual \emph{external} regret guarantee in the game setting~\citep{syrgkanis2015fast}. Given non-increasing step sizes $\{\eta_t >0\}$, the \ref{OG} algorithm initializes $w^0 \in \X$ arbitrarily and $g^0 = 0$. In each step $t \ge 1$, the algorithm plays $x^t$, receives gradient feedback $g^t := \nabla \ell^t(x^t)$, and updates $w^t$, as follows:
\begin{equation}
\label{OG}
\tag{OG}
    \begin{aligned}
        x^t = \Pi_\X \InBrackets{ w^{t-1} - \eta_t g^{t-1}},  \quad w^t = \Pi_\X \InBrackets{ w^{t-1} - \eta_t g^t}. 
    \end{aligned}
\end{equation}
\paragraph{Improved Individual Regret } We first give an improved instance-dependent regret bound of \ref{OG}. Specifically, we show that that for any convex function $f \in \+F_{\convex}$, the adversarial proximal regret of \ref{OG} is $\reg^T_{\+F_{\convex}} = O(\sqrt{P^T})$, where $P^T:=\sum_{t=1}^T \InNorms{g^t- g^{t-1}}^2$  is the total sum of gradient variation. 

\begin{theorem}[Adversarial Regret Bound for \ref{OG}]
\label{thm:adversaril regret of OG}
    Let $\+X \subseteq \-R^d$ be a closed convex set and $\{\ell^t: \+X \rightarrow \-R\}$ be a sequence of convex loss functions. For any convex function $f \in \+F_{\convex}(\+X)$, denote $p^t = \prox_f(x^t)$, \ref{OG} guarantees for all $T \ge 1$,
    \begin{align*}
        \sum_{t=1}^T \InAngles{\nabla \ell^t(x^t), x^t - \prox_f(x^t)} \le \frac{D^2 + 2B_f - \InNorms{w^T-p^T}^2}{2\eta_T} + \sum_{t=1}^T\eta_t\InNorms{g^t - g^{t-1}}^2 - \sum_{t=1}^{T} \frac{1}{2\eta_t} \InNorms{x^t - w^{t}}^2,
    \end{align*}
    where $D = \max_{0\le t\le T-1} \InNorms{w^t-p^{t+1}}$ and $B_f = \max_{t \in [T]}f(p^t)) - \min_{t \in [T]} f(p^t)$. If the step size is constant $\eta_t = \eta$, then the above two bounds hold with $D = \InNorms{w^0 - p^1}$ and $B_f = f(p^1) - f(p^T)$.
\end{theorem}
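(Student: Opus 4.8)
The plan is to mirror the proof of \Cref{theorem:GD proximal regret}, replacing the one-step descent inequality of \hyperref[GD]{GD} with the standard \emph{optimistic} one-step inequality and then handling the dynamic comparator $p^t = \prox_f(x^t)$ via the telescoping lemma (\Cref{lemma:telescope}). First I would establish the per-step bound: for every $u \in \+X$,
\[
\InAngles{g^t, x^t - u} \le \tfrac{1}{2\eta_t}\InNorms{w^{t-1}-u}^2 - \tfrac{1}{2\eta_t}\InNorms{w^t - u}^2 - \tfrac{1}{2\eta_t}\InNorms{w^{t-1}-x^t}^2 - \tfrac{1}{2\eta_t}\InNorms{x^t - w^t}^2 + \eta_t\InNorms{g^t - g^{t-1}}^2.
\]
This follows by writing $\InAngles{g^t, x^t - u} = \InAngles{g^t, w^t - u} + \InAngles{g^{t-1}, x^t - w^t} + \InAngles{g^t - g^{t-1}, x^t - w^t}$, applying the first-order optimality inequalities for $w^t = \Pi_\+X[w^{t-1} - \eta_t g^t]$ and $x^t = \Pi_\+X[w^{t-1}-\eta_t g^{t-1}]$ to the first two inner products (the $\InNorms{w^{t-1}-w^t}^2$ terms cancel), and bounding the third by the optimistic stability estimate $\InNorms{x^t - w^t} \le \eta_t\InNorms{g^t - g^{t-1}}$, a consequence of non-expansiveness of $\Pi_\+X$. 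The crucial point is that I retain \emph{both} negative terms $-\tfrac{1}{2\eta_t}\InNorms{w^{t-1}-x^t}^2$ and $-\tfrac{1}{2\eta_t}\InNorms{x^t-w^t}^2$ rather than discarding them.

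Next I would set $u = p^t$, sum over $t$, and reorganize $\sum_t \tfrac{1}{2\eta_t}\InParentheses{\InNorms{w^{t-1}-p^t}^2 - \InNorms{w^t - p^t}^2}$ exactly as in \Cref{theorem:GD proximal regret}: pulling out the boundary terms $\tfrac{1}{2\eta_1}\InNorms{w^0-p^1}^2$ and $-\tfrac{1}{2\eta_T}\InNorms{w^T-p^T}^2$, separating the step-size-change terms $\InNorms{w^t-p^{t+1}}^2\InParentheses{\tfrac{1}{2\eta_{t+1}}-\tfrac{1}{2\eta_t}}$ (bounded by $\tfrac{D^2}{2\eta_T}-\tfrac{D^2}{2\eta_1}$ using that $\{\eta_t\}$ is non-increasing and $D = \max_{0\le t\le T-1}\InNorms{w^t - p^{t+1}}$), and leaving the comparator-change terms $\tfrac{1}{2\eta_t}\InParentheses{\InNorms{w^t - p^{t+1}}^2 - \InNorms{w^t - p^t}^2}$.

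The main obstacle is that, unlike in \hyperref[GD]{GD} where the evolving iterate coincides with the argument of $\prox_f$, here the state $w^t$ differs from $x^{t+1}$ while $p^{t+1} = \prox_f(x^{t+1})$, so \Cref{lemma:telescope} does not apply directly to $w^t$. To fix this I would use the exact identity $\InNorms{w^t - p^{t+1}}^2 - \InNorms{w^t - p^t}^2 = \InNorms{x^{t+1}-p^{t+1}}^2 - \InNorms{x^{t+1}-p^t}^2 + 2\InAngles{w^t - x^{t+1},\, p^t - p^{t+1}}$, bound the first difference by \Cref{lemma:telescope} with $\rho=0$ (giving $2f(p^t) - 2f(p^{t+1}) - \InNorms{p^t - p^{t+1}}^2$), and absorb the cross term by Young's inequality $2\InAngles{w^t - x^{t+1}, p^t - p^{t+1}} \le \InNorms{p^t - p^{t+1}}^2 + \InNorms{w^t - x^{t+1}}^2$. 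Here the negative $\InNorms{p^t-p^{t+1}}^2$ terms cancel exactly, which is precisely where convexity of $f$ (rather than weak convexity) is used, leaving $2f(p^t)-2f(p^{t+1})+\InNorms{w^t - x^{t+1}}^2$. The $f$-differences telescope against $B_f$ by the same Abel-summation argument as in \Cref{theorem:GD proximal regret}, contributing $\tfrac{2B_f}{2\eta_T}$.

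The delicate point is the residual $\sum_{t=1}^{T-1}\tfrac{1}{2\eta_t}\InNorms{w^t - x^{t+1}}^2$: since $\InNorms{w^t - x^{t+1}} = \InNorms{\Pi_\+X[w^t] - \Pi_\+X[w^t - \eta_{t+1}g^t]}$ scales with the \emph{gradient norm} rather than its variation, naively it would destroy the variation-only bound. The observation that makes everything work is that these terms are cancelled by the retained negative stability terms: reindexing gives $-\sum_{t}\tfrac{1}{2\eta_t}\InNorms{w^{t-1}-x^t}^2 = -\sum_{t=0}^{T-1}\tfrac{1}{2\eta_{t+1}}\InNorms{w^t - x^{t+1}}^2$, and because $\eta_{t+1}\le\eta_t$ implies $\tfrac{1}{2\eta_t}\le\tfrac{1}{2\eta_{t+1}}$, the combination equals $\sum_{t=1}^{T-1}\InParentheses{\tfrac{1}{2\eta_t}-\tfrac{1}{2\eta_{t+1}}}\InNorms{w^t - x^{t+1}}^2 - \tfrac{1}{2\eta_1}\InNorms{w^0-x^1}^2 \le 0$. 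Collecting the boundary terms, the $\tfrac{D^2}{2\eta_T}$ and $\tfrac{2B_f}{2\eta_T}$ contributions, the surviving $-\sum_t\tfrac{1}{2\eta_t}\InNorms{x^t - w^t}^2$ stability terms, the $-\tfrac{1}{2\eta_T}\InNorms{w^T-p^T}^2$ term, and the variation terms $\sum_t\eta_t\InNorms{g^t-g^{t-1}}^2$ yields the claimed inequality. The constant step-size case follows verbatim with $D = \InNorms{w^0 - p^1}$ and $B_f = f(p^1) - f(p^T)$, since then the step-size-change terms vanish and the $f$-differences telescope directly.
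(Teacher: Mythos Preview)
Your proposal is correct and essentially identical to the paper's proof: both retain the two negative stability terms from the \ref{OG} one-step inequality, isolate the comparator-change term $\InNorms{w^t-p^{t+1}}^2-\InNorms{w^t-p^t}^2$, bound it by $2(f(p^t)-f(p^{t+1}))+\InNorms{w^t-x^{t+1}}^2$ using the optimality of $p^{t+1}=\prox_f(x^{t+1})$ together with Young's inequality, and then cancel the residual $\tfrac{1}{2\eta_t}\InNorms{w^t-x^{t+1}}^2$ terms against the retained $-\tfrac{1}{2\eta_{t+1}}\InNorms{x^{t+1}-w^t}^2$ via the step-size monotonicity. The only cosmetic difference is that you reduce to \Cref{lemma:telescope} via the identity $\InNorms{w^t-p^{t+1}}^2-\InNorms{w^t-p^t}^2=\InNorms{x^{t+1}-p^{t+1}}^2-\InNorms{x^{t+1}-p^t}^2+2\InAngles{w^t-x^{t+1},p^t-p^{t+1}}$, while the paper carries out the same computation inline.
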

\begin{remark}
   Compared to the regret of \hyperref[GD]{GD} which has dependence on $\sum_{t=1}^T\eta_t \InNorms{g^t}^2$, the main improvement of \ref{OG} is that the dependence becomes $P^T = \sum_{t=1}^T\InNorms{g^t - g^{t-1}}^2$. We note that $P^T$ is at most $O(T)$ but may be much smaller when the environment is stable.
\end{remark}
\begin{remark}\label{rem:RVU}
    We remark that \Cref{thm:adversaril regret of OG} is weaker than the ``Regret bounded by Variation of Utilities'' (RVU)  bound~\citep{syrgkanis2015fast} since the negative terms $-\sum_{t=1}^T \frac{1}{2\eta_t}\InNorms{x^t -w^t}^2$ is not $-\sum_{t=1}^T \frac{1}{2\eta_t}\InNorms{x^t -x^{t-1}}^2$ and does not fully cancel the positive terms $\sum_{t=1}^T\eta_t\InNorms{g^t - g^{t-1}}^2$ even in smooth games. Note that  $\reg_{\+F_{\convex}}^T$ is non-negative. If one was able to prove an RVU bound for $\reg_{\+F_\convex}^T$, then this would lead to constant $O(1)$ individual external regret in general convex games (which is not known to be achievable in the literature).
\end{remark}

In the game setting, when every player employs \ref{OG}, the gradient variation is small since players update their strategies in a stable manner. Using techniques from~\citep{syrgkanis2015fast}, we use this improved gradient variation dependent regret bound to get an improved $O(T^{1/4})$ individual proximal regret and thus faster convergence to approximate proximal correlated equilibrium in games.

\begin{theorem}[Improved Individual Proximal Regret of \ref{OG} in the Game Setting]
\label{thm:game regret of OG}
    In a $G$-Lipschitz $L$-smooth convex game $\+G = \{[n], \{\+X_i\}, \{u_i\}\}$, when all players employ \ref{OG} with fixed step size $\eta > 0$, then for each player $i$, any $f \in \+F_{\convex}$, and $T \ge 1$, their proximal regret is bounded as
    \begin{align*}
        \sum_{t=1}^T \InAngles{\nabla_{x_i} u_i(x^t), \prox_f(x^t_i)-x^t_i }\le \frac{D_{\+X_i}^2 + 2B_f}{\eta} +  2\eta G^2 + 3n L^2 G^2 \eta^3 T,
    \end{align*}
    where $D_{\+X_i}$ is the diameter of $\+X_i$ and $B_{i,f} = f(p^1) - f(p^T)$ with $p^t:= \prox_f(x^t_i)$.
    Choosing $\eta = T^{-\frac{1}{4}}$, the above regret is upper bounded by $(D_{\+X_i}^2 + 2B_{i,f} + 4n L^2 G^2)T^{\frac{1}{4}}$. Furthermore, let $\Phi = \{\prox_f: f \in \+F_{\convex}\}$, the empirical distribution of strategy profiles converges to an $\varepsilon$-approximate $\Phi$-equilibrium in 
    $O(1/\varepsilon^{\frac{4}{3}})$ iterations.
\end{theorem}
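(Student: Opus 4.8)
The plan is to specialize the adversarial guarantee of \Cref{thm:adversaril regret of OG} to self-play and then control the gradient-variation term through smoothness and the stability of \ref{OG}. First I would cast each player's problem in the online framework: player $i$ faces the convex loss $\ell_i^t(\cdot) = -u_i(\cdot, x_{-i}^t)$, so that $g_i^t = \nabla \ell_i^t(x_i^t) = -\nabla_{x_i} u_i(x^t)$ and the quantity to be bounded, $\sum_{t} \InAngles{\nabla_{x_i} u_i(x^t), \prox_f(x_i^t) - x_i^t}$, is precisely the linearized proximal regret $\sum_t \InAngles{g_i^t, x_i^t - \prox_f(x_i^t)}$ covered by \Cref{thm:adversaril regret of OG}. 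Invoking that theorem with constant step size $\eta_t = \eta$, discarding the non-positive stability terms, and bounding $\InNorms{w_i^0 - p_i^1} \le D_{\+X_i}$ gives the structural bound
\[
\sum_{t=1}^T \InAngles{\nabla_{x_i} u_i(x^t), \prox_f(x_i^t) - x_i^t} \le \frac{D_{\+X_i}^2 + 2B_{i,f}}{2\eta} + \eta\, P_i^T, \qquad P_i^T := \sum_{t=1}^T \InNorms{g_i^t - g_i^{t-1}}^2,
\]
which already matches the first term of the claim (the stated constant is weaker and so follows a fortiori).

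The heart of the argument is bounding $P_i^T$ using the $L$-smoothness of the game and the stability of the iterates. For $t \ge 2$, smoothness gives $\InNorms{g_i^t - g_i^{t-1}} = \InNorms{\nabla_{x_i} u_i(x^t) - \nabla_{x_i} u_i(x^{t-1})} \le L \InNorms{x^t - x^{t-1}}$, where $x^t = (x_1^t, \dots, x_n^t)$ is the joint profile, while the $t = 1$ term contributes only $\InNorms{g_i^1}^2 \le G^2$ because $g^0 = 0$. I would then decompose the joint movement coordinatewise, $\InNorms{x^t - x^{t-1}}^2 = \sum_{j=1}^n \InNorms{x_j^t - x_j^{t-1}}^2$, and bound each single-player step. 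Using the non-expansiveness of the projection together with the two \ref{OG} updates $x_j^t = \Pi_{\+X_j}[w_j^{t-1} - \eta g_j^{t-1}]$ and $w_j^{t-1} = \Pi_{\+X_j}[w_j^{t-2} - \eta g_j^{t-1}]$, and the gradient bound $\InNorms{g_j^t} \le G$, I would establish $\InNorms{x_j^t - x_j^{t-1}} = O(\eta G)$ (splitting through $w_j^{t-1}$). Summing over $j$ and $t$ yields $P_i^T = O(G^2 + n L^2 G^2 \eta^2 T)$, hence $\eta P_i^T = O(\eta G^2 + n L^2 G^2 \eta^3 T)$, matching the remaining two terms of the claim.

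Combining the two bounds gives the stated regret inequality. Setting $\eta = T^{-1/4}$ balances the leading terms: $(D_{\+X_i}^2 + 2B_{i,f})/\eta = O(T^{1/4})$ and $n L^2 G^2 \eta^3 T = O(T^{1/4})$, with the middle term $\eta G^2$ of lower order, giving an $O(T^{1/4})$ individual proximal regret. For the equilibrium statement I would apply the general reduction recalled in \Cref{sec:pre}: taking $\Phi = \{\prox_f : f \in \+F_{\convex}\}$, the empirical distribution of play is a $(\max_i \reg_{\Phi(\+X_i)}^T / T)$-approximate $\Phi$-equilibrium. Since the per-player $\Phi$-regret is $O(T^{1/4})$, the time-average is $\reg_{\Phi(\+X_i)}^T / T = O(T^{-3/4})$, and solving $T^{-3/4} = \varepsilon$ yields $T = O(\varepsilon^{-4/3})$ iterations.

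The step I expect to be the main obstacle is the movement bound for $P_i^T$. As flagged in \Cref{rem:RVU}, the negative terms produced by \Cref{thm:adversaril regret of OG} are of the form $-\InNorms{x_i^t - w_i^t}^2/(2\eta)$ rather than $-\InNorms{x_i^t - x_i^{t-1}}^2/(2\eta)$, so they cannot be telescoped against the gradient variation in the RVU manner that would cancel it and yield $O(1)$ regret; this forces the cruder $O(\eta G)$ per-step stability bound and caps the rate at $O(T^{1/4})$. Additional care is needed because player $i$'s gradient variation is driven by the movement of the \emph{entire} profile $x^t$, coupling all $n$ players through smoothness (the source of the factor $n$), and because the boundary round $t = 1$ (where $g^0 = 0$ and $x^1 = w^0$) must be treated separately to keep the constants clean.
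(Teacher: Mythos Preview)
Your proposal is correct and follows essentially the same route as the paper: invoke \Cref{thm:adversaril regret of OG} with constant step size, then bound the gradient-variation term $P_i^T$ by combining $L$-smoothness of the game with the crude $O(\eta G)$ per-step stability of \ref{OG} (the paper uses the three-point split $x_j^t \to w_j^{t-1} \to w_j^{t-2} \to x_j^{t-1}$, but your two-point split through $w_j^{t-1}$ works equally well). Your handling of the boundary term $t=1$, the factor $n$ from the joint profile, and the final $\varepsilon^{-4/3}$ equilibrium conversion all match the paper.
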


\paragraph{Improved Social Regret} Let $\+F_{\alpha,\strc}$ be the set of $\alpha$-strongly convex functions for $\alpha > 0$. In the following, we show that when each player employs \ref{OG} in a smooth convex game, the social regret $\sum_{i \in [n]} \reg_{i, \+F_{\alpha, \strc}}^T = O(1)$ (omitting dependence on other parameters for simplicity), where $\reg_{i, \+F_{\alpha, \strc}}^T$ is the individual proximal regret with respect to $\+F_{\alpha, \strc}(\+X_i)$ of player $i$.

We first remark that our result improves the existing $O(1)$ social external regret bound~\citep{syrgkanis2015fast} even though $\+F_{\alpha,\strc}$ no longer contains the indicator functions (that we used earlier to recover external regret).
To see this, note that every constant strategy modification $\phi(x) = b$ is a symmetric affine endomorphism. Consider a new strategy modification $\phi_\alpha := (1-\alpha)I+\alpha b$. By discussion in \Cref{sec:symmetric linear swap regret} and \Cref{prop:affine transformation}, we know that $\phi_\alpha=\prox_f$ for a $\alpha$-strongly convex quadratic function $f$, and the external regret against $b$ is upper bounded by $\frac{1}{\alpha}\reg_f^T$. Thus the external regret $\reg_{\ext}^T$ is at most $\frac{1}{\alpha} \cdot \reg_{\+F_{\alpha, \strc}}^T$. 

Moreover, while \citep{syrgkanis2015fast} achieves $O(1)$ social external using the RVU bound (\Cref{rem:RVU}), as mentioned, our bound in \Cref{thm:adversaril regret of OG} is weaker than RVU.
Nevertheless, by improving the analysis in \Cref{thm:adversaril regret of OG} and obtaining an improved regret bound with an additional negative term due to strong convexity, we still manage to show an RVU proximal regret bound for strongly convex $f$ and consequently $O(1)$ social proximal regret in smooth convex games.


\begin{theorem}[$O(1)$ social proximal regret with respect to strongly convex funtions]\label{thm:o(1)social regret}
    In a $G$-Lipschitz and $L$-smooth convex game $\+G = \{[n], \{\+X_i\}, \{u_i\}\}$, when each player employs \ref{OG} with step size $\eta \le \sqrt{\frac{\min\{\alpha,1\}}{8n L^2}}$, we have for any $\alpha$-strongly convex functions $\{f_i\}$ and $T \ge 1$
    \begin{align*}
        \sum_{i=1}^n \reg_{f_i}^T \le \frac{\sum_i^n \InParentheses{D_{\+X_i}+B_{f_i}}}{2\eta} + n\eta G^2,
    \end{align*}
    where $D_{\+X_i}$ is the diameter of $\+X_i$ and $B_{f_i}=f(\prox_{f_i}(x^1_i)) - f(\prox_{f_i}(x^T_i))$ for each $i \in [n]$.
\end{theorem}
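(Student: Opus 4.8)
The plan is to prove \Cref{thm:o(1)social regret} in two stages: first establish an individual RVU-type (``regret bounded by variation of utilities'') proximal regret bound for \ref{OG} against a \emph{strongly} convex $f$, and then run the standard summation-and-cancellation argument across players. The starting point is the single-step inequality for \ref{OG}, but in the stronger form that retains \emph{both} stability terms (the one dropped in the proof of \Cref{thm:adversaril regret of OG}): for the time-varying comparator $p^t=\prox_f(x^t)$, splitting $\InAngles{g^t-g^{t-1},x^t-w^t}$ with Young's inequality gives
\[
\InAngles{g^t,x^t-p^t}\le \tfrac{1}{2\eta}\InParentheses{\InNorms{w^{t-1}-p^t}^2-\InNorms{w^t-p^t}^2}+\eta\InNorms{g^t-g^{t-1}}^2-\tfrac{1}{2\eta}\InNorms{x^t-w^{t-1}}^2-\tfrac{1}{4\eta}\InNorms{x^t-w^t}^2.
\]
Summing over $t$, the only non-telescoping piece is the comparator drift $\InNorms{w^t-p^{t+1}}^2-\InNorms{w^t-p^t}^2$, and handling it is the crux of the whole argument.

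To handle the drift I would first record a strongly convex refinement of \Cref{lemma:telescope}: repeating its proof with $f(p)\ge f(p_x)+\InAngles{v,p-p_x}+\tfrac{\alpha}{2}\InNorms{p-p_x}^2$ improves the coefficient from $(1-\rho)$ to $(1+\alpha)$, i.e.\ $\InNorms{x-p_x}^2-\InNorms{x-p}^2\le 2f(p)-2f(p_x)-(1+\alpha)\InNorms{p-p_x}^2$. The obstacle is that this lemma needs the anchor point to be the \emph{preimage} of the moving prox point, whereas here the anchor is the \ref{OG} state $w^t$ while $p^{t+1}=\prox_f(x^{t+1})$. I resolve this with the re-centering identity
\[
\InNorms{w^t-p^{t+1}}^2-\InNorms{w^t-p^t}^2 = 2\InAngles{w^t-x^{t+1},\,p^t-p^{t+1}} + \InParentheses{\InNorms{x^{t+1}-p^{t+1}}^2-\InNorms{x^{t+1}-p^t}^2},
\]
whose second difference is now anchored at $x^{t+1}$, so the refined lemma applies and yields the telescoping term $2(f(p^t)-f(p^{t+1}))$ together with $-(1+\alpha)\InNorms{p^t-p^{t+1}}^2$. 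The cross term is split by Young's inequality as $2\InAngles{w^t-x^{t+1},p^t-p^{t+1}}\le (1+\alpha)\InNorms{p^t-p^{t+1}}^2+\tfrac{1}{1+\alpha}\InNorms{w^t-x^{t+1}}^2$: the first half is \emph{exactly} absorbed by the $-(1+\alpha)\InNorms{p^t-p^{t+1}}^2$ gained from strong convexity, and the second half is charged against the \ref{OG} stability term $-\tfrac{1}{2\eta}\InNorms{x^{t+1}-w^t}^2$, leaving a residual $-\tfrac{\alpha}{2\eta(1+\alpha)}\InNorms{x^{t+1}-w^t}^2$. Combining this residual with the untouched $-\tfrac{1}{4\eta}\InNorms{x^t-w^t}^2$ via $\InNorms{x^t-x^{t-1}}^2\le 2\InNorms{x^t-w^{t-1}}^2+2\InNorms{w^{t-1}-x^{t-1}}^2$ produces the desired negative movement term $-\tfrac{\beta}{\eta}\sum_t\InNorms{x^t-x^{t-1}}^2$ with $\beta=\min\{\tfrac{\alpha}{4(1+\alpha)},\tfrac18\}\ge \tfrac{\min\{\alpha,1\}}{8}$. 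Telescoping the $w$-terms into $\tfrac{1}{2\eta}(\InNorms{w^0-p^1}^2-\InNorms{w^T-p^T}^2)\le \tfrac{D_{\+X_i}^2}{2\eta}$ and the $f$-terms into $O(B_{f_i}/\eta)$ gives the individual RVU proximal regret bound for player $i$.

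Finally I would sum the individual bounds over $i\in[n]$ and invoke $L$-smoothness to control the gradient variation by the profile movement, $\InNorms{g_i^t-g_i^{t-1}}^2=\InNorms{\nabla_{x_i}u_i(x^t)-\nabla_{x_i}u_i(x^{t-1})}^2\le L^2\sum_j\InNorms{x_j^t-x_j^{t-1}}^2$, so that $\sum_i\eta\InNorms{g_i^t-g_i^{t-1}}^2\le n\eta L^2\sum_j\InNorms{x_j^t-x_j^{t-1}}^2$. Choosing $\eta\le\sqrt{\tfrac{\min\{\alpha,1\}}{8nL^2}}$ ensures $n\eta L^2\le \tfrac{\beta}{\eta}$, so the aggregated positive variation is fully dominated by the accumulated negative movement terms and cancels; the only uncancelled positive contribution is the $t=1$ variation $\eta\InNorms{g_i^1}^2\le \eta G^2$ (since $g^0=0$ and there is no $x^0$), summing to $n\eta G^2$. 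Collecting the boundary terms then yields $\sum_{i}\reg_{f_i}^T\le \tfrac{\sum_i(D_{\+X_i}+B_{f_i})}{2\eta}+n\eta G^2$.

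I expect the main obstacle to be precisely the comparator-drift step: finding the re-centering identity and then balancing the Young's-inequality constants so that strong convexity and \ref{OG} stability \emph{simultaneously} absorb the two halves of the cross term while still leaving an $\Omega(\min\{\alpha,1\})/\eta$ negative movement term. This is exactly what distinguishes the strongly convex case from the merely convex \Cref{thm:adversaril regret of OG}, whose bound is not of RVU form (as noted in \Cref{rem:RVU}).
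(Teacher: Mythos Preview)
Your proposal is correct and follows essentially the same route as the paper. The paper likewise first proves an individual RVU-type proximal bound for \ref{OG} against $\alpha$-strongly convex $f$ by redoing the comparator-drift step of \Cref{thm:adversaril regret of OG} with the extra $-\alpha\InNorms{p^t-p^{t+1}}^2$ from strong convexity (your re-centering identity plus the refined \Cref{lemma:telescope} is an algebraically equivalent packaging of their direct expansion of $\InNorms{w^t-p^{t+1}}^2-\InNorms{w^t-p^t}^2$), then combines the residual negative terms into $-\tfrac{\min\{\alpha,1\}}{8\eta}\InNorms{x^{t+1}-x^t}^2$ and finishes with the same summation-and-cancellation across players via $L$-smoothness.
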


\subsection{Proof of \Cref{thm:adversaril regret of OG} }
\begin{proof}
Fix a function $f \in \+F_{\convex}(\+X)$. We define $p^t = \prox_f(x^t)$. Following standard analysis of \ref{OG}~\citep[Lemma 1]{rakhlin2013optimization}, we have
\begin{align}
    & \sum_{t=1}^T \ell^t(x^t) - \ell^t(p^t) \le \sum_{t=1}^T \InAngles{\nabla \ell^t(x^t), x^t - p^t}  \nonumber \\
    & \le \sum_{t=1}^T \frac{1}{2\eta_t} \InParentheses{ \InNorms{w^{t-1} - p^t}^2 - \InNorms{w^t - p^t}^2 } + \eta_t \InNorms{g^t - g^{t-1}}^2 - \frac{1}{2\eta_t}\InParentheses{  \InNorms{x^t - w^t}^2 + \InNorms{x^t - w^{t-1}}^2}  \nonumber \\
    & = \frac{\InNorms{w^0 - p^1}^2}{2\eta_1} + \sum_{t=1}^{T-1} \frac{1}{2\eta_t}\InParentheses{ \InNorms{w^{t} - p^{t+1}}^2 - \InNorms{w^{t} - p^{t}}^2} + \sum_{t=1}^{T-1} \InParentheses{ \frac{1}{2\eta_{t+1}} - \frac{1}{2\eta_t}} \InNorms{w^t-p^{t+1}}^2\nonumber\\
    & \quad \quad - \frac{1}{2\eta_T} \InNorms{w^T - p^T}^2 + \sum_{t=1}^T \eta_t \InNorms{g^t - g^{t-1}}^2 - \sum_{t=1}^T\frac{1}{2\eta_t}\InParentheses{\InNorms{x^t - w^t}^2 + \InNorms{x^t - w^{t-1}}^2} \nonumber \\
    &\le  \frac{D^2 - \InNorms{w^T-p^T}^2}{2\eta_T} + \sum_{t=1}^{T-1} \frac{1}{2\eta_t}\InParentheses{ \InNorms{w^{t} - p^{t+1}}^2 - \InNorms{w^{t} - p^{t}}^2}+ \sum_{t=1}^T \eta_t \InNorms{g^t - g^{t-1}}^2 \nonumber  \\
    &\quad \quad -\sum_{t=1}^T \frac{1}{2\eta_t}\InParentheses{  \InNorms{x^t - w^t}^2 + \InNorms{x^t - w^{t-1}}^2}.\nonumber
\end{align}
In the last inequality, we use the fact that $\{\eta_t\}$ is non-increasing and $D:= \max_{0\le t\le T-1} \InNorms{w^t - p^{t+1}}$. If $\eta_t = \eta$ is constant, then the above inequality holds with $D = \InNorms{w^0 - p^1}$.

Now we apply a similar analysis from \Cref{lemma:telescope} to 
upper bound the term $\InNorms{w^{t} - p^{t+1}}^2 - \InNorms{w^{t} - p^{t}}^2$. Since $p^{t+1} = \prox_f(x^{t+1})$, by \Cref{fact:proximal operator} we know there exists $v \in \partial f(p^{t+1})$ such that $\InAngles{x' - p^{t+1},x^{t+1} - v - p^{t+1} } \le 0$ for any $x' \in \+X$.
We now proceed as:
\begin{align*}
    & \InNorms{w^{t} - p^{t+1}}^2 - \InNorms{w^{t} - p^{t}}^2 \\
     &= 2\InAngles{p^{t} - p^{t+1},  w^{t} - p^{t+1}} - \InNorms{p^t -  p^{t+1}}^2 \\
     &= 2\InAngles{p^{t} - p^{t+1}, v} + 2\InAngles{p^{t} - p^{t+1},  w^{t} - v - p^{t+1}} - \InNorms{p^{t} - p^{t+1}}^2 \\
     &= 2\InAngles{p^{t} - p^{t+1}, v} + 2\InAngles{p^{t} - p^{t+1},  x^{t+1} - v - p^{t+1}} + 2\InAngles{p^{t} - p^{t+1}, w^{t} -  x^{t+1}}- \InNorms{p^{t} - p^{t+1}}^2 \\
     &\le 2\InAngles{p^{t} - p^{t+1}, v}  + 2\InAngles{p^{t} - p^{t+1}, w^{t} -  x^{t+1}}- \InNorms{p^{t} - p^{t+1}}^2 \\
     & \le 2(f(p^t) - f(p^{t+1})) + \InNorms{w^t - x^{t+1}}^2,
\end{align*}
where in the second last-inequality we use $\InAngles{p^{t} - p^{t+1},  x^{t+1} - v - p^{t+1}} \le 0$; in the last inequality, we use convexity of $f$ and the basic inequality $2\InAngles{a,b} - \InNorms{b}^2 \le \InNorms{a}^2$.

Now we combine the above two inequalities and get
\begin{align*}
&\sum_{t=1}^T \InAngles{\nabla \ell^t(x^t), x^t - p^t} \\
&\le \frac{D^2 - \InNorms{w^T-p^T}^2}{2\eta_T} + \sum_{t=1}^{T-1}\frac{1}{2\eta_t}\InParentheses{f(p^t) - f(p^{t+1})} + \sum_{t=1}^T \eta_t \InNorms{g^t - g^{t-1}}^2  -\sum_{t=1}^T \frac{1}{2\eta_t} \InNorms{x^t - w^t}^2.
\end{align*}
Similar as \Cref{theorem:GD proximal regret}, we can use $B_f:= \max_{t \in [T]} f(p^t) - \min_{t \in [T]} f(p^t)$ to telescope the second term and get
\begin{align*}
&\sum_{t=1}^T \InAngles{\nabla \ell^t(x^t), x^t - p^t} \le \frac{D^2 + 2B_f - \InNorms{w^T-p^T}^2}{2\eta_T} + \sum_{t=1}^T \eta_t \InNorms{g^t - g^{t-1}}^2 -\sum_{t=1}^T \frac{1}{2\eta_t} \InNorms{x^t - w^t}^2. 
\end{align*}
When the step size is constant, then the above inequality holds with $B_f= f(p^1) - f(p^T)$.
\end{proof}
\subsection{Proof of \Cref{thm:game regret of OG}}
\begin{proof}
Let us fix any player $i \in [n]$ in the smooth game. In every step $t$, player $i$'s loss function $\ell_i^t: \X_i \rightarrow \R$  is $\InAngles{- \nabla_{x_i}u_i(x^t), \cdot }$ determined by their utility function $u_i$ and all players' actions $x^t$. Therefore,  their gradient feedback is $g^t = -\nabla_{x_i} u_i(x^t)$. 
For all $t \ge 2$, we have
\begin{align*}
    \InNorms{g^t - g^{t-1}}^2 &= \InNorms{\nabla u_i(x^t) - \nabla u_i(x^{t-1})}^2 \\
    & \le L^2\InNorms{x^t - x^{t-1}}^2 \\
    & = L^2\sum_{i=1}^n \InNorms{x^t_i - x^{t-1}_i}^2 \\
    & \le 3 L^2\sum_{i=1}^n \InParentheses{ \InNorms{x^t_i - w^{t-1}_i}^2 + \InNorms{w^{t-1}_i - w^{t-2}_i}^2 + \InNorms{w^{t-2}_i - x^{t-1}_i}^2}\\
    & \le 3n L^2 \eta^2 G^2,
\end{align*}
where we use $L$-smoothness of the utility function $u_i$ in the first inequality, and the update rule of \ref{OG} as well as the fact that gradients are bounded by $G$ in the last inequality.

Applying the above inequality to the regret bound obtained in \Cref{thm:adversaril regret of OG}, the individual proximal regret of player $i$ for a convex function $f$ is upper bounded by
\begin{align*}
     \sum_{t=1}^T \InAngles{\nabla_{x_i} u_i(x^t), \prox_f(x^t_i)-x^t_i }\le \frac{D_{\+X_i}^2 + 2B_{i,f}}{\eta} +  2\eta G^2 + 3n L^2 G^2 \eta^3 T
\end{align*}
Choosing $\eta = T^{-\frac{1}{4}}$, the left hand side is bounded by $(D_{\+X_i}^2+2B_{i,f} + 4nL^2G^2) T^{\frac{1}{4}}$. Furthermore, let $\Phi = \{\prox_f: f \in \+F_{\convex}\}$, the empirical distribution of played strategy profiles converges to an $\varepsilon$-approximate $\Phi$-equilibrium in $O(1/\varepsilon^{\frac{4}{3}})$ iterations.
\end{proof}

\subsection{Proof of \Cref{thm:o(1)social regret}}
We first present the adversarial proximal regret w.r.t. strongly convex functions of \ref{OG}.
\begin{theorem}
    Let $\{\ell^t: \+X \rightarrow \-R\}$ be a sequence of convex loss functions. Let $\{x^t\}$ be the sequence produced by \ref{OG} with a fixed learning rate $\eta$. Denote $p^t = \prox_f(x^t)$. Then the proximal regret of \ref{OG} against any $\alpha$-strongly convex function $f$ is 
    \begin{align*}
        \reg_{f}^T &\le \sum_{t=1}^T \InAngles{\nabla \ell^t(x^t), x^t - \prox_f(x^t)} \\
        &\le \frac{D^2+2B_f}{2\eta} + \sum_{t=1}^T \eta \InNorms{g^t - g^{t-1}}^2 - \sum_{t=1}^T\frac{1}{2\eta}\InNorms{x^t - w^t}^2 - \sum_{t=1}^{T-1} \frac{\alpha}{4\eta} \InNorms{x^{t+1} - w^{t}}^2 \\
        & \le \frac{D^2+2B_f}{2\eta} + \sum_{t=1}^T \eta \InNorms{g^t - g^{t-1}}^2 - \sum_{t=1}^{T-1}\frac{\min\{\alpha,1\}}{8\eta}\InNorms{x^{t+1} - x^t}^2,
    \end{align*}
    where $D = \InNorms{w^0 - p^1}$ and $B_f = f(p^1) - f(p^T)$.
\end{theorem}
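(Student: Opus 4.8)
The plan is to follow the same skeleton as the proof of \Cref{thm:adversaril regret of OG} for convex $f$, but to sharpen the single ``key step'' bounding the non-telescoping quantity $\InNorms{w^t-p^{t+1}}^2-\InNorms{w^t-p^t}^2$ so as to harvest an extra negative term from strong convexity. First I would record, from convexity of the losses, that $\reg_f^T\le\sum_{t=1}^T\InAngles{\nabla\ell^t(x^t),x^t-p^t}$ with $p^t:=\prox_f(x^t)$, and then invoke the standard \ref{OG} inequality (Rakhlin--Sridharan) to obtain, for constant step size $\eta$, $\sum_t\InAngles{\nabla\ell^t(x^t),x^t-p^t}\le\sum_t\tfrac{1}{2\eta}\InParentheses{\InNorms{w^{t-1}-p^t}^2-\InNorms{w^t-p^t}^2}+\eta\InNorms{g^t-g^{t-1}}^2-\tfrac{1}{2\eta}\InParentheses{\InNorms{x^t-w^t}^2+\InNorms{x^t-w^{t-1}}^2}$. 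Reorganizing the first sum exactly as in \Cref{thm:adversaril regret of OG} isolates the boundary term $\tfrac{D^2}{2\eta}$ with $D=\InNorms{w^0-p^1}$, a leftover $-\tfrac{1}{2\eta}\InNorms{w^T-p^T}^2\le 0$ that I would simply drop, and the non-telescoping cross terms $\sum_{t=1}^{T-1}\tfrac{1}{2\eta}\InParentheses{\InNorms{w^t-p^{t+1}}^2-\InNorms{w^t-p^t}^2}$.

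The heart of the argument is the refined key step. As in \Cref{lemma:telescope}, I would expand $\InNorms{w^t-p^{t+1}}^2-\InNorms{w^t-p^t}^2=2\InAngles{p^t-p^{t+1},w^t-p^{t+1}}-\InNorms{p^t-p^{t+1}}^2$, insert a subgradient $v\in\partial f(p^{t+1})$, and use the proximal variational inequality (\Cref{fact:proximal operator}) $\InAngles{p^t-p^{t+1},x^{t+1}-v-p^{t+1}}\le 0$ to peel off the cross term $2\InAngles{p^t-p^{t+1},w^t-x^{t+1}}$. The new ingredient is to bound $2\InAngles{p^t-p^{t+1},v}$ by $\alpha$-strong convexity as $2\bigl(f(p^t)-f(p^{t+1})\bigr)-\alpha\InNorms{p^t-p^{t+1}}^2$, yielding $\InNorms{w^t-p^{t+1}}^2-\InNorms{w^t-p^t}^2\le 2\bigl(f(p^t)-f(p^{t+1})\bigr)+2\InAngles{p^t-p^{t+1},w^t-x^{t+1}}-(1+\alpha)\InNorms{p^t-p^{t+1}}^2$. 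I would then spend the surplus $-(1+\alpha)\InNorms{p^t-p^{t+1}}^2$ via Young's inequality on the cross term, calibrating the constant so the $\InNorms{p^t-p^{t+1}}^2$ terms cancel and the coefficient of $\InNorms{w^t-x^{t+1}}^2$ drops to $\tfrac{1}{1+\alpha}$. Since strong convexity with modulus $\ge 1$ already implies modulus-$1$ strong convexity, I may assume $\alpha\le 1$ (this is the origin of the $\min\{\alpha,1\}$ in the last bound), whence $\tfrac{1}{1+\alpha}\le 1-\tfrac{\alpha}{2}$ and the key step becomes $\InNorms{w^t-p^{t+1}}^2-\InNorms{w^t-p^t}^2\le 2\bigl(f(p^t)-f(p^{t+1})\bigr)+\bigl(1-\tfrac{\alpha}{2}\bigr)\InNorms{w^t-x^{t+1}}^2$.

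Plugging this back, the positive terms $\tfrac{1}{2\eta}\bigl(1-\tfrac{\alpha}{2}\bigr)\InNorms{w^t-x^{t+1}}^2$ (re-indexed by $s=t+1$ to $\InNorms{x^s-w^{s-1}}^2$) only \emph{partially} cancel the \ref{OG} negative terms $-\tfrac{1}{2\eta}\InNorms{x^s-w^{s-1}}^2$, leaving exactly the residual $-\sum_{t=1}^{T-1}\tfrac{\alpha}{4\eta}\InNorms{x^{t+1}-w^t}^2$, while the terms $-\tfrac{1}{2\eta}\InNorms{x^t-w^t}^2$ survive untouched. Telescoping $\sum_{t=1}^{T-1}\tfrac{1}{\eta}\bigl(f(p^t)-f(p^{t+1})\bigr)=\tfrac{B_f}{\eta}$ with $B_f=f(p^1)-f(p^T)$ and combining with $\tfrac{D^2}{2\eta}$ gives the second displayed bound. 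For the third bound I would apply $\InNorms{x^{t+1}-x^t}^2\le 2\InNorms{x^{t+1}-w^t}^2+2\InNorms{x^t-w^t}^2$ and factor out the smaller of the two surviving coefficients, namely $\tfrac{\min\{\alpha,1\}}{4\eta}$, turning the two negative sums into $-\sum_{t=1}^{T-1}\tfrac{\min\{\alpha,1\}}{8\eta}\InNorms{x^{t+1}-x^t}^2$. The main obstacle is the refined key step: one must tune Young's inequality so that the strong-convexity surplus is precisely enough both to annihilate the $\InNorms{p^t-p^{t+1}}^2$ terms and to shrink the $\InNorms{w^t-x^{t+1}}^2$ coefficient strictly below $1$, since it is this deficit---rather than the quantity $\InNorms{p^t-p^{t+1}}^2$ itself---that must reappear as the usable negative term $\InNorms{x^{t+1}-w^t}^2$ driving the eventual $O(1)$ social-regret bound.
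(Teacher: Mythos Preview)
Your proposal is correct and is exactly the ``direct adaptation'' the paper sketches: you repeat the \ref{OG} analysis of \Cref{thm:adversaril regret of OG}, sharpen the bound on $\InNorms{w^t-p^{t+1}}^2-\InNorms{w^t-p^t}^2$ using $\alpha$-strong convexity, and calibrate Young's inequality so that the coefficient of $\InNorms{w^t-x^{t+1}}^2$ drops from $1$ to $1-\tfrac{\alpha}{2}$, leaving the desired residual $-\tfrac{\alpha}{4\eta}\InNorms{x^{t+1}-w^t}^2$ after cancellation. One very minor caveat: your reduction to $\alpha\le 1$ means that, as written, you establish the \emph{second} displayed inequality with $\min\{\alpha,1\}$ in place of $\alpha$; this is harmless since the third inequality (the one used in \Cref{thm:o(1)social regret}) already carries $\min\{\alpha,1\}$ and comes out exactly as stated.
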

\begin{proof}  This is a direct adaptation of the proof for the convex $f$ cases in \Cref{thm:adversaril regret of OG}. We get the additional $-\sum_{t=1}^{T-1} \frac{\alpha}{4\eta}\InNorms{x^{t+1} - w^t}^2$ term in the step where we bound $\sum_{t=1}^{T-1}\InParentheses{\InNorms{w^{t} - p^{t+1}}^2 - \InNorms{w^{t} - p^{t}}^2}$ since $f$ is $\alpha$-strongly convex.
\end{proof}

\begin{proof}[Proof of \Cref{thm:o(1)social regret}]
    In an $L$-smooth game, we have 
    \begin{align*}
        \InNorms{\nabla_i u_i(x^{t+1}) - \nabla_i u_i(x^{t})}^2 \le L^2 \InNorms{x^{t+1} - x^t}^2, \forall i \in[n].
    \end{align*}
    Thus the sum of players' proximal regret is 
    \begin{align*}
        \sum_{i=1}^n \reg^T_{f_i} &\le \frac{\sum_{i=1}^n \InParentheses{D_{\+X_i}^2 + B_{f_i}}}{2\eta} + \eta  \sum_{i=1}^n\sum_{t=1}^T \InNorms{g^t_i - g^{t-1}_i}^2 - \frac{\min\{\alpha,1\}}{8\eta} \sum_{i=1}^n\sum_{t=1}^{T-1}  \InNorms{x_i^{t+1} - x_ii^t}^2 \\
        &\le \frac{\sum_{i=1}^n\InParentheses{D_{\+X_i}^2 + B_{f_i}}}{2\eta} + n\eta G^2 +n\eta L^2 \sum_{t=1}^{T-1} \InNorms{x^{t+1} - x^t}^2 - \frac{\min\{\alpha,1\}}{8\eta} \sum_{t=1}^{T-1}  \InNorms{x^{t+1} - x^t}^2 \\
        &\le \frac{\sum_{i=1}^n\InParentheses{D_{\+X_i}^2 + B_{f_i}}}{2\eta} + n\eta G^2 + \InParentheses{n\eta L^2 - \frac{\min\{\alpha,1\}}{8\eta}} \sum_{t=1}^{T-1}  \InNorms{x^{t+1} - x^t}^2 \\
        &\le \frac{\sum_{i=1}^n\InParentheses{D_{\+X_i}^2 + B_{f_i}}}{2\eta} + n\eta G^2,
    \end{align*}
    where we use (1) the gradient norm $\InNorms{\nabla_i u_i(x)} \le G, \forall i \in [n]$; (2) the $L$-smoothness of the utitlities; (3) and $\eta \le \sqrt{\frac{\min\{\alpha,1\}}{8nL^2}}$ in the above inequalities.
\end{proof}

\printbibliography

\appendix
\section{Comparison with~\citep{cai2024tractable} and~\citep{ahunbay2024first}}\label{app:comparison}
In this section, we discuss and compare two closely related works. 

\paragraph{Comparision with \citep{cai2024tractable}} 
The work of~\citep{cai2024tractable} studies $\Phi$-regret and $\Phi$-equilibrium in non-convex games. 
They find that for smooth non-convex games, the problem of learning an approximate $\Phi$-equilibrium in the local regime can be reduced to no-$\Phi$-regret learning against convex---in fact, linear---losses. 
For online convex optimization, \cite{cai2024tractable} propose two new notions of $\Phi$-regret, the projection-based regret and the interpolation-based regret. 
\begin{itemize}
    \item \textbf{Projection-Based Regret}: $\Phi = \{ \phi_v(\cdot) = \Pi_{\+X}[\cdot-v] : \InNorms{v} \le 1\}$ contains all strategy modifications that add a fixed direction deviation and project when necessary. 
    \item \textbf{Interpolation-Based Regret}: $\Phi = \{\phi_{\alpha,x}(\cdot) = (1-\alpha) (\cdot) + \alpha x: x\in \+X, \alpha \le 1\}$ contains all strategy modifications that interpolate between the current point and a fixed point. 
\end{itemize}
\citep{cai2024tractable} show that \hyperref[GD]{Online Gradient Descent (GD)} efficiently minimizes these two notions of regret and show that projection-based regret is incomparable with the external regret. Our work significantly generalizes \citep{cai2024tractable} by the new notion of proximal regret. As we have discussed in the introduction and \Cref{sec:OG game setting}, the projection-based regret is proximal regret w.r.t linear functions, while the interpolation-based regret is equivalent to external regret and is covered by proximal regret w.r.t strongly convex functions. Moreover, using the reduction in \citep[Lemma 1]{cai2024tractable}, our results also show that \hyperref[GD]{GD} converges to approximate local proximal CE for smooth nonconvex games in the local regime (see \Cref{app:proximal local} for details).

\paragraph{Comparison with \citep{ahunbay2024first}.} 
Building on~\citep{cai2024tractable}, the early version of \citet{ahunbay2024first} proposed a generalized set of strategy modifications based on gradient fields, which includes both projection-based and interpolation-based regret as special cases. 
His work encouraged us to consider a broad family of strategy modifications based on proximal operators---a family much more general than those in~\citep{cai2024tractable}---which we present here. 
While the two approaches are different and largely incomparable, the key distinction is that the early version of \citet{ahunbay2024first} studies different notions of equilibrium and focuses on providing inequality constraints for gradient flow in games, and does not address $\Phi$-regret in the adversarial online learning setting or the approximation of standard $\Phi$-equilibria, which are the main focus of our work. 

More specifically, \citet{ahunbay2024first} proposes two first-order equilibrium notions called \emph{local CCE} and \emph{stationary CCE}. Consider a $G$-Lipschitz and $L$-smooth game $\+G$, a family of Lipschitz continuous gradient fields $\{\nabla f: f \in F\}$ where $F$ is a set of continuous differentiable functions with bounded gradient norm $G_f = \max_{x \in \+X} \InNorms{\nabla f(x)}$ and Lipschitz constant $L_f$. A distribution $\sigma$ over $\+X$ is an $\varepsilon$-\emph{local CCE} if 
\begin{align}\label{eq:Local CCE}
    \sum_{i\in[n]} \-E_{x \sim \sigma} \InBrackets{\InAngles{\Pi_{T_{\+X_i}(x_i)}[\nabla f_i(x)], \nabla_i u_i(x)   }}\le \varepsilon \cdot \poly(G,L, G_f, L_f), \forall f \in F.\footnotemark
\end{align}\footnotetext{$T_\+X(x)$ is the tangent cone of $x \in \+X$}
Let $\Phi_F = \{\phi_f(\cdot)= \Pi_{\+X}[\cdot - \nabla f(\cdot)] : f \in F\}$. The definition of $\varepsilon$-\emph{local CCE} can be seen as the first-order approximation of the $\Phi_F$-equilibria.  A distribution $\sigma$ over $\+X$ is an $\varepsilon$-\emph{stationary CCE} if 
\begin{align}\label{eq:stationary CCE}
    \left | \sum_{i\in[n]} \-E_{x \sim \sigma} \InBrackets{\InAngles{\nabla f_i(x), \Pi_{T_{\+X_i}(x_i)}[\nabla_i u_i(x)]   }} \right|\le \varepsilon \cdot \poly(G,L, G_f, L_f), \forall f \in F.
\end{align}
The local CCE notion is related to \emph{expected variational inequality (EVI)} developed in the subsequent work by~\citet{zhang2025expected}. The stationary CCE is a different notion from both $\Phi$-equilibria and EVI.

Concurrent to our results, a later version of \citet{ahunbay2024first} presents an adversarial proximal regret guarantee for \hyperref[GD]{GD} but requires additional regularity assumptions on the action sets. Building on our results, Ahunbay extended his original analysis and obtained an adversarial proximal regret guarantee for \hyperref[GD]{GD} for general convex sets in the latest version of~\citep{ahunbay2024first}. Moreover, while our analysis naturally generalizes to \hyperref[MD]{Mirror Descent (MD)} and the Bregman setting, his analysis does not fully extend to the Bregman case but requires the distance-generating function $\phi$ to be ``steep''.

\section{Proximal Operator-Based Local Strategy Modifications}\label{app:proximal local}
Following~\citep{cai2024tractable}, we consider \hyperref[GD]{Online Gradient Descent (GD)} in smooth \emph{non-convex} games, which is a smooth game where players' utilities may not be concave.

We consider the set of local strategy modifications based on the proximal operator. Formally, the set $\Phiprox(\delta)$ encompasses all strategy modifications $\prox_f$ that are $\delta$-local.
\begin{align*}
    \Phiprox(\delta) = \{\prox_f: f \in \+F_{\wc}, \InNorms{\prox_f(x) -x} \le \delta,\forall x \in \+X\}
\end{align*}
We remark that for any function $f$ with bounded gradients such that $\InNorms{\nabla f(x)} \le G_f, \forall x \in \+X$, the prox operator $\prox_{a f}$ is $\delta$-local for any scaler $0 < a \le \frac{\delta}{G_f}$.

We define $\Phiproxeq(\delta) = \Pi_{i=1}^n \Phi^{\+X_i}_{\prox}(\delta)$. By Lemma 1 in \cite{cai2024tractable} and \Cref{theorem:GD proximal regret}, we immediately get the following corollary on learning approximate $\Phiproxeq(\delta)$-equilibrium in the \emph{local regime}, i.e., approximation error $\ge \frac{L\delta^2}{2}$. 
\begin{corollary}
\label{corollary:local proximal equilibrium}
    For any $\delta, \varepsilon > 0$, when each player in a $L$-smooth and $G$-Lipschitz game employ \hyperref[GD]{Online Gradient Descent}, their empirical distribution of played strategy profiles converges to an $(\varepsilon+\frac{L\delta^2}{2})$-approximate $\Phiproxeq(\delta)$-equilibrium in $O(1/\varepsilon^2)$ iterations.
\end{corollary}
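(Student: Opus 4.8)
The plan is to combine the adversarial linearized proximal-regret guarantee of \hyperref[GD]{GD} from \Cref{theorem:GD proximal regret} with the reduction of \citet[Lemma 1]{cai2024tractable}, which converts a no-$\Phi$-regret guarantee against linearized losses into a \emph{local} approximate $\Phi$-equilibrium for smooth, possibly non-concave, games. Two ingredients are needed: (i) a \emph{simultaneous} linearized proximal-regret bound over the entire family $\Phiprox(\delta)$, and (ii) the first-order approximation error incurred when replacing genuine utility deviations by their linearizations, which $\delta$-locality keeps at $\tfrac{L\delta^2}{2}$ and which is exactly the unavoidable slack of the local regime.

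First I would fix a player $i$ and model its interaction as online convex optimization against the \emph{linear} losses $\ell_i^t(\cdot) = \InAngles{-\nabla_{x_i} u_i(x^t), \cdot}$, so that the feedback is $g^t = -\nabla_{x_i} u_i(x^t)$ with $\InNorms{g^t}\le G$. Because linear losses are convex, \Cref{theorem:GD proximal regret} applies in its linearized form and, with step size $\eta_t = 1/\sqrt{t}$ (\Cref{remark:GD}), gives for every $f \in \+F_{\wc}(\+X_i)$ the bound $\sum_{t=1}^T \InAngles{\nabla_{x_i} u_i(x^t), \prox_f(x^t_i) - x^t_i} \le (D_{\+X_i}^2 + B_f + G^2)\sqrt{T}$.

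The crux is to make this bound \emph{uniform} over all of $\Phiprox(\delta)$, i.e.\ to control $B_f = \max_t f(p^t) - \min_t f(p^t)$ independently of $f$, where $p^t = \prox_f(x^t_i)$. I would obtain this directly from \Cref{lemma:telescope}: instantiating it with $x = x^s_i$, $p_x = p^s$, and $p = p^t$ and discarding the non-positive terms $-\InNorms{x^s_i - p^s}^2$ and $-(1-\rho)\InNorms{p^t - p^s}^2$ yields $f(p^s) - f(p^t) \le \tfrac{1}{2}\InNorms{x^s_i - p^t}^2 \le \tfrac{1}{2} D_{\+X_i}^2$ for every $s,t$, hence $B_f \le \tfrac{1}{2} D_{\+X_i}^2$ for \emph{every} weakly convex $f$. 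Consequently $\max_{\phi \in \Phiprox(\delta)} \reg_\phi^T = O(\sqrt{T})$ simultaneously.

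Finally I would pass from linearized regret to the true equilibrium gap using $L$-smoothness, which is the substance of \citet[Lemma 1]{cai2024tractable}. For any $\delta$-local $\phi = \prox_f$, the quadratic upper bound from smoothness of $u_i$ in $x_i$ gives $u_i(\phi(x_i), x_{-i}) - u_i(x) \le \InAngles{\nabla_{x_i} u_i(x), \phi(x_i) - x_i} + \tfrac{L}{2}\InNorms{\phi(x_i) - x_i}^2 \le \InAngles{\nabla_{x_i} u_i(x), \phi(x_i) - x_i} + \tfrac{L\delta^2}{2}$. Averaging over the empirical distribution $\sigma = \tfrac{1}{T}\sum_{t=1}^T \delta_{x^t}$ turns the inner-product term into $\tfrac{1}{T}\reg_\phi^T = O(1/\sqrt{T})$ and leaves the $\tfrac{L\delta^2}{2}$ slack; taking $T = O(1/\varepsilon^2)$ drives the regret term below $\varepsilon$, which is exactly the claimed $(\varepsilon + \tfrac{L\delta^2}{2})$-approximate $\Phiproxeq(\delta)$-equilibrium. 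I expect the main obstacle to be precisely the uniformity in the third step: a priori the per-function constant $B_f$ could grow without bound across the infinite family $\Phiprox(\delta)$, and it is \Cref{lemma:telescope} that collapses it to the single diameter-dependent constant $\tfrac{1}{2} D_{\+X_i}^2$, so that ``simultaneous'' no-proximal-regret---and hence the equilibrium conclusion---holds.
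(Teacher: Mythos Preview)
Your proposal is correct and follows the same route as the paper, which simply invokes \Cref{theorem:GD proximal regret} together with \citet[Lemma 1]{cai2024tractable}. Your explicit derivation of a uniform bound $B_f \le \tfrac{1}{2}D_{\+X_i}^2$ from \Cref{lemma:telescope} is a worthwhile addition: the paper's one-line proof leaves this uniformity implicit, yet without it one cannot pass from the per-$f$ bound in \Cref{remark:GD} to a \emph{simultaneous} $O(\sqrt{T})$ guarantee over all of $\Phiprox(\delta)$.
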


\section{Generalization to Mirror Descent in the Bregman Setup}\label{sec:bregman proximal}
A \emph{distance-generating function} is $\phi:\+X \rightarrow \-R$ a continuous differentiable and strictly convex function. The \emph{Bregman divergence} associated with $\phi$ is defined as 
\begin{align*}
    D_\phi(x|y) = \phi(x) - \phi(y) - \InAngles{\nabla \phi(y), x - y}, \forall x,y \in \+X.
\end{align*}

\begin{definition}[Bregman Proximal Operator]
    The \emph{Bregman proximal operator} of $f$ associated with $\phi$ is 
    \begin{align*}
        \prox^\phi_f(x) = \argmin_{y \in \+X} \{f(y) + D_\phi(y|x) \}.
    \end{align*}
\end{definition}

For simplicity, we assume $\phi$ is $1$-strongly convex with respect to some norm $\InNorms{\cdot}$\footnote{Here the norm $\InNorms{\cdot}$ is not necessarilly $\ell_2$-norm and we denote its dual norm as $\InNorms{\cdot}_\star$.} in this section\footnote{For $\alpha$-strongly convex $\phi$, we can use $\frac{1}{\alpha}\phi$.}. Then $\prox_f$ is well defined if $f$ is $\rho$-weakly convex (for $\rho < 1$) with respect to  norm $\InNorms{\cdot}$, that is, $f + \frac{\rho}{2}\InNorms{\cdot}$ is convex. We denote this class of functions as $\+F_{\wc}$, which contains all convex functions and $L$-smooth ($L < 1$) functions with respect to the norm $\InNorms{\cdot}$. 

Examples of $1$-strongly convex distance-generating functions include
\begin{itemize}
    \item $\phi(x) = \sum_{i=1}^d x_i \log x_i$ defined on $\-R^d_{+}$. Then $D_\phi(x|y) = \sum_{i=1}^d x_i \log\frac{x_i}{y_i} + x_i - y_i$ is the KL divergence defined for all $x \ge 0$ and $y > 0$.
    \item $\phi(x) = \frac{1}{2}\InNorms{x}^2$. Then $D_\phi(x|y) = \frac{1}{2}\InNorms{x-y}^2$ is the squared Euclidean norm.
\end{itemize}
The proximal operator in \Cref{dfn:proximal operator} is a special case of Bregman proximal operator with $\phi(x) = \frac{1}{2}\InNorms{x}^2$. 

Similarly, we define the proximal regret associated with $\phi$ and $f$ as follows:
\begin{align}
    \reg^\phi_f(T) := \sum_{t=1}^T \ell^t(x^t) - \ell^t(\prox^\phi_f(x^t)). \tag{Bregman proximal regret}
\end{align}
We show that \hyperref[MD]{Mirror Descent (MD)} associated with $\phi$ has Bregman proximal regret $\reg^\phi_f(T) = O(\sqrt{T})$ for all $f \in \+F_{\wc}$ simultaneously.

\begin{algorithm}[!ht]\label{MD}
    \KwIn{strategy space $\+X$, distance generating function $\phi$, non-increasing step sizes $\eta_t > 0$} 
    \caption{Mirror Descent (MD)}
    Initialize $x^1 \in \+X$ arbitrarilly\\
    \For{$t = 1,2, \ldots,$}{
    play $x^t$ and receive $\nabla \ell^t(x^t)$.\\
    update $x^{t+1} = \argmin_{x \in \+X}\{ \InAngles{ \eta \nabla \ell^t(x^t), x} +D_\phi(x|x^t) \}$.
    }
\end{algorithm}

\begin{theorem}\label{theorem:MD proximal regret}
    Let $\+X \subseteq \-R^d$ be a closed convex set and $\phi$ a $1$-strongly convex distance generating function w.r.t. norm $\InNorms{\cdot}$. Let $\{\ell^t: \+X \rightarrow \-R\}$ be a sequence of convex loss functions. 
    Let $\{x^t\}$ be the sequence generated by  \hyperref[MD]{Mirror Descent} with $\phi$. Then for any $f \in \+F_{\wc}$, denoting $p^t =\prox^\phi_f(x^t)$, we have
    \begin{align*}
        \sum_{t=1}^T \ell^t(x^t) - \ell^t(p^t) &\le \sum_{t=1}^T \InAngles{\nabla \ell^t(x^t), x^t - p^t}\\
        &\le \frac{D+ B_f}{\eta_T} + \sum_{t=1}^T \frac{\eta_t}{2}\InNorms{\nabla \ell^t(x^t)}_\star^2 - \sum_{t=1}^{T-1} \frac{(1-\rho)}{2}\InNorms{p^t -p^{t+1}}^2,
    \end{align*}
    where $D:= \max_{t \in [T]} D_\phi(p^t|x^t)$ and $B_f:= \max_{t \in [T]} f(p^t) - \min_{t \in [T]} f(p^t)$. If the step size is constant, i.e., $\eta_t = \eta$ for all $t \ge 1$, then the above bound also holds for $D:= D_\phi(p^1|x^1)$ and $B_f = f(p^1) - f(p^T)$.
\end{theorem}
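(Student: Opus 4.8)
The plan is to replicate the structure of the proof of \Cref{theorem:GD proximal regret} almost verbatim, replacing the Euclidean squared distance $\frac12\InNorms{\cdot}^2$ everywhere by the Bregman divergence $D_\phi$. Three ingredients are needed: (i) a Bregman analogue of the Key Inequality \Cref{lemma:telescope}; (ii) the standard one-step regret bound for \hyperref[MD]{Mirror Descent}; and (iii) the same telescoping/reindexing bookkeeping that converts the non-telescoping comparator-change terms into a $B_f$ contribution.

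First I would establish the \emph{Bregman Key Inequality}: for any $x,p \in \X$ and $p_x := \prox^\phi_f(x)$ with $f$ being $\rho$-weakly convex w.r.t.\ $\InNorms{\cdot}$,
\begin{align*}
    D_\phi(p_x|x) - D_\phi(p|x) \le f(p) - f(p_x) - \frac{1-\rho}{2}\InNorms{p - p_x}^2.
\end{align*}
To prove this, I would invoke the three-point identity $D_\phi(p|x) = D_\phi(p|p_x) + D_\phi(p_x|x) + \InAngles{\nabla\phi(p_x) - \nabla\phi(x),\, p - p_x}$, which rearranges to $D_\phi(p_x|x) - D_\phi(p|x) = -D_\phi(p|p_x) - \InAngles{\nabla\phi(p_x) - \nabla\phi(x),\, p - p_x}$. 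The first-order optimality of the Bregman prox (the analogue of \Cref{fact:proximal operator}) yields a subgradient $v \in \partial f(p_x)$ with $\InAngles{\nabla\phi(x) - \nabla\phi(p_x) - v,\, p - p_x} \le 0$, so the inner-product term is at most $\InAngles{v,\, p - p_x}$. Bounding this by $\rho$-weak convexity of $f$ as $\InAngles{v,\, p - p_x} \le f(p) - f(p_x) + \frac{\rho}{2}\InNorms{p - p_x}^2$, and using $1$-strong convexity of $\phi$ to get $D_\phi(p|p_x) \ge \frac{1}{2}\InNorms{p - p_x}^2$, gives the claim; the hypothesis $\rho<1$ is exactly what keeps the residual coefficient $-\frac{1-\rho}{2}$ negative.

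Next I would combine this with the standard \hyperref[MD]{Mirror Descent} one-step bound $\InAngles{\nabla\ell^t(x^t),\, x^t - p^t} \le \frac{1}{\eta_t}\big(D_\phi(p^t|x^t) - D_\phi(p^t|x^{t+1})\big) + \frac{\eta_t}{2}\InNorms{\nabla\ell^t(x^t)}_\star^2$, itself obtained from the optimality of $x^{t+1}$, the three-point identity, and the cancellation of $\frac12\InNorms{x^t - x^{t+1}}^2$ against $-D_\phi(x^{t+1}|x^t)$ via $1$-strong convexity. Summing and reindexing exactly as in \eqref{eq:gd-1} isolates, at each $t$, the comparator-change term $D_\phi(p^{t+1}|x^{t+1}) - D_\phi(p^t|x^{t+1})$, to which I apply the Bregman Key Inequality with $x = x^{t+1}$, $p_x = p^{t+1}$, $p = p^t$; this replaces it by $f(p^t) - f(p^{t+1}) - \frac{1-\rho}{2}\InNorms{p^t - p^{t+1}}^2$. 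Telescoping the $f$-differences against $f^\ast = \min_t f(p^t)$ and using that $\{\eta_t\}$ is non-increasing produces the $B_f$ term and the leading $\frac{D+B_f}{\eta_T}$, with the step-size-mismatch terms $\big(\tfrac{1}{\eta_{t+1}}-\tfrac{1}{\eta_t}\big)D_\phi(p^{t+1}|x^{t+1})$ bounded using $D = \max_t D_\phi(p^t|x^t)$ and telescoping to $\frac{D}{\eta_T}$; the constant-step-size case follows by taking $D = D_\phi(p^1|x^1)$ and $B_f = f(p^1) - f(p^T)$.

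The main obstacle is the Bregman Key Inequality, where the asymmetry of $D_\phi$ must be tracked carefully: the three-point identity is what forces the residual to appear as $D_\phi(p|p_x)$ rather than $D_\phi(p_x|p)$, and only because $\phi$ is $1$-strongly convex does this residual dominate $\frac12\InNorms{p-p_x}^2$ regardless of argument order, so that it can absorb the $-\frac{\rho}{2}\InNorms{p-p_x}^2$ slack from weak convexity of $f$ and still leave a genuinely negative curvature term. A secondary point is verifying well-posedness of $\prox^\phi_f$ under the stated compatibility of constants ($\phi$ being $1$-strongly convex and $f$ being $\rho$-weakly convex with $\rho<1$ w.r.t.\ the same norm $\InNorms{\cdot}$), which guarantees that $f + D_\phi(\cdot|x)$ is strongly convex and hence admits the unique minimizer that the optimality condition requires. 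Everything after the Key Inequality is routine algebra identical in form to the Euclidean proof.
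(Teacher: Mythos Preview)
Your proposal is correct and follows essentially the same route as the paper's proof: the paper combines the standard one-step Mirror Descent inequality with precisely your Bregman Key Inequality (derived via the three-point identity, the first-order optimality of the Bregman prox, $1$-strong convexity of $\phi$, and $\rho$-weak convexity of $f$), and then performs the same reindexing/telescoping with $D$ and $B_f$. The only cosmetic difference is that you isolate the Key Inequality as a standalone lemma, whereas the paper carries out the identical chain of inequalities inline at the step bounding $D_\phi(p^{t+1}\mid x^{t+1}) - D_\phi(p^t\mid x^{t+1})$.
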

\begin{proof}
    Denote $p^t = \prox^\phi_f(x^t)$. By convexity of $\ell^t$, we have $\sum_{t=1}^T \ell^t(x^t) - \ell^t(p^t) \le \sum_{t=1}^T \InAngles{\nabla \ell^t(x^t), x^t - p^t}$. By standard analysis of \hyperref[MD]{Mirror Descent} (see e.g., \citep{orabona2019modern}[Lemma 6.10]) and denote $D:= \max_{t \in [T]}D_\phi(p^t|x^t)$,  we have
    \begin{align*}
        \sum_{t=1}^T \InAngles{\nabla \ell^t(x^t), x^t - p^t} &\le \sum_{t=1}^T \frac{1}{\eta_t} \InParentheses{D_\phi(p^t|x^t) - D_\phi(p^t|x^{t+1})} + \sum_{t=1}^T \frac{\eta_t}{2}\InNorms{\nabla \ell^t(x^t)}_\star^2 \\
        &\le \frac{D}{\eta_T} + \sum_{t=1}^T \frac{\eta_t}{2}\InNorms{\nabla \ell^t(x^t)}_\star^2 + \sum_{t=1}^{T-1}\frac{1}{\eta_t} \InParentheses{ D_\phi(p^{t+1}|x^{t+1}) - D_\phi(p^t|x^{t+1})},
    \end{align*}
    where the second inequality holds by the same steps as in the proof of \Cref{theorem:GD proximal regret}. Similar, when we use constant step size $\eta_t= \eta$, the above inequality holds for $D:= D_\phi(p^1|x^1)$. 

    Now we focus on the term $D_\phi(p^{t+1}|x^{t+1}) - D_\phi(p^t|x^{t+1})$ that does not telescope. Recall that $p^{t+1} = \prox_f(x^{t+1}) = \argmin_{x \in \+X} \{f(x) + D_\phi(x|x^{t+1})\}$. We slightly abuse the notation and denote $\partial f(p^{t+1})$ the (sub)gradient such that $\InAngles{\nabla \phi(x^{t+1}) - \partial f(p^{t+1}) - \nabla \phi(p^{t+1}), x - p^{t+1}} \le 0$ for all $x \in \+X$. 
    \begin{align*}
        &D_\phi(p^{t+1}|x^{t+1}) - D_\phi(p^t|x^{t+1}) \\
        &= \InAngles{\nabla \phi(x^{t+1}) - \nabla \phi(p^{t+1}), p^t - p^{t+1}} - D_\phi(p^t|p^{t+1}) \tag{three-point identity} \\
        &= \InAngles{\partial f(p^{t+1}), p^t-p^{t+1}} + \InAngles{\nabla \phi(x^{t+1}) - \partial f(p^{t+1}) - \nabla \phi(p^{t+1}), p^t - p^{t+1}} - D_\phi(p^t|p^{t+1}) \\
        &\le \InAngles{\partial f(p^{t+1}), p^t-p^{t+1}} -D_\phi(p^t|p^{t+1}) \tag{$\nabla \phi(x^{t+1}) - \partial f(p^{t+1}) - \nabla \phi(p^{t+1}) \in N_\+X(p^{t+1})$} \\
        &\le  \InAngles{\partial f(p^{t+1}), p^t-p^{t+1}} -\frac{1}{2}\InNorms{p^t-p^{t+1}}^2\\
        &\le f(p^t) - f(p^{t+1}) - \frac{(1-\rho)}{2} \InNorms{p^t - p^{t+1}}^2,
    \end{align*}
    where the last inequality holds because $f$ is $\rho$-weakly convex and $f(p^t) \ge f(p^{t+1}) + \InAngles{\partial f(p^{t+1}), p^t-p^{t+1}} -\frac{\rho}{2} \InNorms{p^t-p^{t+1}}^2$.

    Combining the above two inequalities and using  $B_f:= \max_{t \in [T]} f(p^t) - \min_{t \in [T]} f(p^t)$ for telescoping (the same as in the proof of \Cref{theorem:GD proximal regret}) gives
    \begin{align*}
        \sum_{t=1}^T \InAngles{\nabla \ell^t(x^t), x^t - p^t} &\le \frac{D}{\eta_T} + \sum_{t=1}^T \frac{\eta_t}{2}\InNorms{\nabla \ell^t(x^t)}_\star^2 + \sum_{t=1}^{T-1} \frac{1}{\eta_t} \InParentheses{ f(p^t) - f(p^{t+1})} - \sum_{t=1}^{T-1} \frac{(1-\rho)}{2}\InNorms{p^t -p^{t+1}}^2 \\
        &\le \frac{D + B_f}{\eta_T} + \sum_{t=1}^T \frac{\eta_t}{2}\InNorms{\nabla \ell^t(x^t)}_\star^2-\sum_{t=1}^{T-1} \frac{(1-\rho)}{2}\InNorms{p^t -p^{t+1}}^2.
    \end{align*}
    We remark that if we use constant step size $\eta_t= \eta$, then the above inequality holds for $B_f:= f(p^1) - f(p^{T+1})$. This completes the proof.
\end{proof}

\end{document}